\numberwithin{equation}{section}
\numberwithin{figure}{section}
\theoremstyle{plain}
\newtheorem{thm}{\protect\theoremname}
  \theoremstyle{plain}
  \newtheorem{prop}[thm]{\protect\propositionname}
  \theoremstyle{definition}
  \newtheorem{defn}[thm]{\protect\definitionname}
  \theoremstyle{plain}
  \newtheorem{cor}[thm]{\protect\corollaryname}
  \theoremstyle{plain}
  \newtheorem{lem}[thm]{\protect\lemmaname}
\date{}
  \providecommand{\corollaryname}{Corollary}
  \providecommand{\definitionname}{Definition}
  \providecommand{\lemmaname}{Lemma}
  \providecommand{\propositionname}{Proposition}
\providecommand{\theoremname}{Theorem}
\begin{document}

\title{On the spectral and homological dimension of $\kappa$-Minkowski
space}

\author{Marco Matassa%
\thanks{SISSA, Via Bonomea 265, I-34136 Trieste, Italy\textit{. E-mail address}:
marco.matassa@sissa.it%
}}
\maketitle
\begin{abstract}
We extend the construction of a spectral triple for $\kappa$-Minkowski
space, previously given for the two-dimensional case, to the general
$n$-dimensional case. This takes into account the modular group naturally
arising from the symmetries of the geometry, and requires the use
of notions that have been recently developed in the frameworks of
twisted and modular spectral triples. First we compute the spectral
dimension, using an appropriate weight, and show that in general it
coincides with the classical one. We also study the classical limit
and the analytic continuation of the associated zeta function. Then
we compare this notion of dimension with the one coming from homology.
To this end, we compute the twisted Hochschild dimension of the universal
enveloping algebra underlying $\kappa$-Minkowski space. The result
is that twisting avoids the dimension drop, similarly to other examples
coming from quantum groups. In particular, the simplest such twist
is given by the inverse of the modular group mentioned above.
\end{abstract}

\section{Introduction}

The \textit{$\kappa$-Poincaré} Hopf algebra \cite{luk1,luk2} and
the related non-commutative spacetime \textit{$\kappa$-Minkowski}
\cite{maj} are interesting toy models to study features arising from
quantum gravity. They have been much investigated in the Hopf algebraic
framework, but so far there have been only a few studies \cite{dandrea-kappa,kappa1,kappa2}
from the point of view of (Euclidean) non-commutative geometry in
the sense of Alain Connes \cite{connes}. Recently we argued in \cite{modulark}
that a natural starting point for such a construction is a KMS weight
which is invariant under the $\kappa$-Poincaré symmetries.

This introduces difficulties for the construction of a spectral triple,
since the necessary modifications to the axioms, due to the presence
of the modular group associated to a KMS weight, are not yet well
understood. In the literature two different approaches have been advocated
to deal, at least partly, with these issues: the first one is that
of \textit{twisted spectral triples} \cite{type III}, where one requires
the boundedness of a twisted commutator, where the twist is given
by an automorphism of the algebra; the second one is that of \textit{modular
spectral triples} \cite{modular1,modular2,modular3} where one, roughly
speaking, replaces the operator trace with a weight having a non-trivial
modular group. We used ideas from both of these approaches in \cite{modulark},
where we constructed a {}``modular spectral triple'' for the two
dimensional $\kappa$-Minkowski space. The quotes are necessary because,
even though the construction follows the spirit of this approach,
strictly speaking it does not satisfy its axioms (but see also \cite{kaad-senior,kaad}).
Indeed they arise from the study of geometries with a periodic modular
group, while in our case we do not have this periodicity. Another
general source of difficulties comes from the fact that we are dealing
with a non-compact geometry: this is an aspect which has not been
developed too much in this approach to non-commutative geometry (but
see \cite{moyal}), and we feel that it should be important to make
some progress in this respect to make more contact with physics.

After this general discussion let us recall briefly the construction
given in \cite{modulark}. Starting from an algebra $\mathcal{A}$,
naturally associated to the commutation relations of $\kappa$-Minkowski
space, we introduce a Hilbert space $\mathcal{H}$ via the GNS-construction
for the KMS weight $\omega$ we mentioned above. We emphasize that
the main difference with respect to other approaches is in the choice
of this weight, which is motivated by the symmetries. The use of the
twisted commutator turns out to be necessary if one wants to satisfy
a boundedness condition and have a good classical limit. Moreover
these requests, together with some symmetry conditions, single out
a unique operator $D$ and unique twist $\sigma$ such that the twisted
commutator is bounded.

However the triple $(\mathcal{A},\mathcal{H},D)$ is not finitely
summable, an outcome which is hinted by the mismatch in the modular
properties of the weight $\omega$ and the non-commutative integral.
We will repeat and emphasize this argument later in the paper, so
we do not elaborate further here. This mismatch also hints at the
possibility that, by choosing an appropriate weight in the sense of
modular spectral triples, we can obtain a finite spectral dimension.
This is indeed the case and we find that, in this setting, the spectral
dimension is finite and coincides with the classical one. Moreover,
by computing the residue at the spectral dimension of the corresponding
zeta function, we recover the weight $\omega$ up to a constant.

Having summarized the construction, which is essential for our investigations
here, we now come to the contents of the paper. First we give an extension
of the results mentioned above to the $n$-dimensional case. This
turns out to be fairly easy, and in doing so we will skip most the
details of the construction, which can be easily filled using the
detailed arguments provided in \cite{modulark}. Secondly we provide
further evidence that, although this construction is still not well
understood as part of a general framework, it should be relevant for
the description of the non-commutative geometry of $\kappa$-Minkowski
space. Using the same ingredients of the two-dimensional case, we
find that the spectral dimension according to our definition is in
general equal to the classical one. Moreover, by computing the residue
at the spectral dimension of the associated zeta function, we recover
the weight $\omega$ as in the two-dimensional case. These results
confirm the intuition that, in passing from the two-dimensional case
to the general one, little changes. Next we analyze some properties
of the zeta function that we introduced. We show that, by taking the
limit of the deformation parameter to zero, it reduces as it should
to the classical setting. Also, as in the commutative setting, this
zeta function can be analytically continued to a meromorphic function
on the complex plane, with only simple poles. The poles of the commutative
case still remain, but additional ones appear due to the presence
of the deformation parameter. The significance of these poles remains
to be investigated.

Another important issue we analyze is the homological dimension of
this geometry. In the framework of non-commutative geometry this notion
is given by the dimension of the Hochschild homology, which in the
commutative case coincides with the spectral dimension. However in
many examples, coming in particular from quantum groups, one finds
that the homological dimension is lower that the spectral dimension,
a phenomenon known as \textit{dimension drop}. In many cases it is
possible to avoid this drop by introducing a twist in the homology
theory, as seen for example in \cite{SLq(2),podles-hom}. Here we
compute the \textit{twisted Hochschild homology} \cite{dual-comp}
of the universal enveloping algebra associated to $\kappa$-Minkowski
space. Similarly to the examples we mentioned above, we show that
the dimension drop occurs at the level of Hochschild homology, but
can be avoided by introducing a twist. More interestingly, the simplest
twist which avoids the drop is the inverse of the modular group of
the weight $\omega$, while the other possible twists are given by
its positive powers. This should be compared to the case of \cite{SLq(2),podles-hom}
and other examples, where the twist is the inverse of the modular
group of the Haar state, and therefore seems to be a general feature
of these non-commutative geometries.

\section{The spectral triple}

\subsection{The $\kappa$-Poincaré and $\kappa$-Minkowski algebras}

In this subsection we summarize the algebraic properties of the $\kappa$-Poincaré
and $\kappa$-Minkowski Hopf algebras, which we denote respectively
by $\mathcal{P}_{\kappa}$ and $\mathcal{M}_{\kappa}$. Actually we
can restrict our attention to the translations sector of the $\kappa$-Poincaré
algebra, which is all that is needed to define the $\kappa$-Minkowski
space, while the complete algebra can be obtained by the bicrossproduct
contruction \cite{maj}. The construction in an arbitrary number of
dimensions was given in \cite{kpoincare-dim}. We have the generators
$P_{\mu}$, with $\mu$ ranging from $0$ to $n-1$, which satisfy
$[P_{\mu},P_{\nu}]=0$. The rest of the Hopf algebra structure is
contained in the coproduct $\Delta:\mathcal{P}_{\kappa}\to\mathcal{P}_{\kappa}\otimes\mathcal{P}_{\kappa}$,
the counit $\varepsilon:\mathcal{P}_{\kappa}\to\mathbb{C}$ and antipode
$S:\mathcal{P}_{\kappa}\to\mathcal{P}_{\kappa}$ which are given by
\[
\begin{split} & \Delta(P_{0})=P_{0}\otimes1+1\otimes P_{0}\ ,\quad\Delta(P_{j})=P_{j}\otimes1+e^{-P_{0}/\kappa}\otimes P_{j}\ ,\\
 & \varepsilon(P_{\mu})=0\ ,\quad S(P_{0})=-P_{0}\ ,\quad S(P_{j})=-e^{P_{0}/\kappa}P_{j}\ .
\end{split}
\]
We adopt the usual general relativistic convention of greek indices
going from $0$ to $n-1$, while latin indices go from $1$ to $n-1$.
From the defining relations we see that the translations sector is
an Hopf subalgebra of $\mathcal{P}_{\kappa}$, and we denote it by
$\mathcal{T}_{\kappa}$. We define the $\kappa$-Minkowski space $\mathcal{M}_{\kappa}$
as the dual Hopf algebra to this subalgebra \cite{maj}. If we denote
the pairing by $\langle\cdot,\cdot\rangle:\mathcal{T}_{\kappa}\times\mathcal{M}_{\kappa}\to\mathbb{C}$,
then the structure of $\mathcal{M}_{\kappa}$ is determined by the
duality relations 
\[
\begin{split}\langle t,xy\rangle & =\langle t^{(1)},x\rangle\langle t^{(2)},y\rangle\ ,\\
\langle ts,x\rangle & =\langle t,x^{(1)}\rangle\langle s,x^{(2)}\rangle\ .
\end{split}
\]
Here we have $t,s\in\mathcal{T}_{\kappa}$, $x,y\in\mathcal{M}_{\kappa}$
and we use the Sweedler notation for the coproduct
\[
\Delta x=\sum_{i}x_{(i)}^{(1)}\otimes x_{(i)}^{(2)}=x^{(1)}\otimes x^{(2)}\ .
\]
From the pairing we deduce that $\mathcal{M}_{\kappa}$ is non-commutative,
since $\mathcal{T}_{\kappa}$ is not cocommutative, that is the coproduct
in $\mathcal{T}_{\kappa}$ is not trivial. On the other hand, since
$\mathcal{T}_{\kappa}$ is commutative we have that $\mathcal{M}_{\kappa}$
is cocommutative. The algebraic relations for the $\kappa$-Minkowski
Hopf algebra $\mathcal{M}_{\kappa}$ are

\[
[X^{0},X^{j}]=-\kappa^{-1}X^{j}\ ,\qquad\Delta X^{\mu}=X^{\mu}\otimes1+1\otimes X^{\mu}\ .
\]

This concludes the usual presentation of the $\kappa$-Poincaré and
$\kappa$-Minkowski Hopf algebras. In this paper we are going to use
a slightly different presentation, as done in \cite{sitarz}, to avoid
issues of convergence in the algebra. Instead of considering the exponential
$e^{-P_{0}/\kappa}$ as a power series in $P_{0}$, we consider it
as an invertible element $\mathcal{E}$ and rewrite the defining relations
as
\[
\begin{split} & [P_{\mu},P_{\nu}]=0\ ,\quad[P_{\mu},\mathcal{E}]=0\ ,\\
 & \Delta(P_{0})=P_{0}\otimes1+1\otimes P_{0}\ ,\quad\Delta(P_{j})=P_{j}\otimes1+\mathcal{E}\otimes P_{j}\ ,\quad\Delta(\mathcal{E})=\mathcal{E}\otimes\mathcal{E}\ ,\\
 & \varepsilon(P_{\mu})=0\ ,\quad\varepsilon(\mathcal{E})=1\ ,\\
 & S(P_{0})=-P_{0}\ ,\quad S(P_{j})=-\mathcal{E}^{-1}P_{j}\ ,\quad S(\mathcal{E})=\mathcal{E}^{-1}\ .
\end{split}
\]
In this form, the role of the subalgebra $\mathcal{T}_{\kappa}$ is
played by the one generated by $P_{\mu}$ and $\mathcal{E}$, which
we call the \textit{extended momentum algebra} and denote again by
$\mathcal{T}_{\kappa}$. An appropriate pairing defining $\kappa$-Minkowski
space can be easily written in terms of these generators. More importantly
it can be made into a Hopf $*$-algebra by defining the involution
as $P_{\mu}^{*}=P_{\mu}$ and $\mathcal{E}^{*}=\mathcal{E}$.

In the following we are going to consider the case of Euclidean signature
and so, strictly speaking, we should refer to the Euclidean counterpart
of the $\kappa$-Poincaré algebra, which is known as the \textit{quantum
Euclidean group}. However the boost generator $N$ is not going to
play a central role in our discussion, which is going to be based
on the extended momentum algebra, and therefore most of our relations
do not depend on the signature. Henceforth we only make reference
to the $\kappa$-Poincaré algebra and make some remarks when needed.

One more remark on the notation: we are going to write all formulae
in terms of the parameter $\lambda:=\kappa^{-1}$, instead of $\kappa$.
The motivation comes from the fact that the Poincaré algebra is obtained
in the {}``classical limit'' $\lambda\to0$, in a similar fashion
to the classical limit $\hbar\to0$ of quantum mechanics. This makes
more transparent checking that some formulae reduce, in this limit,
to their respective undeformed counterparts.

\subsection{The algebraic construction}

We begin by generalizing to $n$ dimensions the construction of the
$*$-algebra given in \cite{sitarz}. We will skip most of the computations,
since they are completely analogous to the two-dimensional case, but
we will provide some details regarding the modular aspects of the
construction.

The underlying algebra of the $n$-dimensional $\kappa$-Minkowski
space is the enveloping algebra of the Lie algebra with generators
$ix^{0}$ and $ix^{k}$ (with $k=1,\ldots,n-1$), fullfilling the
commutation relations $[x^{0},x^{k}]=i\lambda x^{k}$. It has a faithful
$n\times n$ matrix representation $\varphi$ given by
\[
\varphi(ix^{0})=\left(\begin{array}{ccc}
-\lambda & \cdots & 0\\
\vdots & \ddots & 0\\
0 & \cdots & 0
\end{array}\right)\ ,\qquad\varphi(ix^{k})=\left(\begin{array}{ccccc}
0 & \cdots & 1 & \cdots & 0\\
\vdots & \ddots & \ddots & \ddots & 0\\
0 & \cdots & 0 & \cdots & 0
\end{array}\right)\ .
\]
The matrix $\varphi(ix^{k})$ has non-zero values only in the $(k+1)$-th
column. An element of the associated group $G$ can be presented in
the form
\begin{equation}
S(a)=\left(\begin{array}{cccc}
e^{-\lambda a_{0}} & a_{1} & \cdots & a_{n-1}\\
0 & 1 & 0 & 0\\
\vdots & \ddots & \ddots & \vdots\\
0 & 0 & \cdots & 1
\end{array}\right)\ .\label{eq:param}
\end{equation}
Here we use the notation $a=(a_{0},\vec{a})$, where $\vec{a}=(a_{1},\cdots,a_{n-1})$.
The group operations written in the $(a_{0},\cdots,a_{n-1})$ coordinates
are given by
\begin{equation}
S(a)S(b)=S(a_{0}+b_{0},\vec{a}+e^{-\lambda a_{0}}\vec{b})\ ,\qquad S(a)^{-1}=S(-a_{0},-e^{\lambda a_{0}}\vec{a})\ .\label{eq:g-ops}
\end{equation}

\begin{prop}
The left and right invariant measures on $G$ are given respectively
by $d\mu_{L}(a)=e^{\lambda(n-1)a_{0}}d^{n}a$ and $d\mu_{R}(a)=d^{n}a$,
where $d^{n}a$ is the Lebesgue measure on $\mathbb{R}^{n}$.\end{prop}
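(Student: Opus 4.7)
The plan is to compute the Jacobians of left and right multiplication in the $(a_{0},\vec{a})$ coordinates directly from the group law \eqref{eq:g-ops}, and then determine the density $f(a)$ of an invariant measure $d\mu(a)=f(a)d^{n}a$ by solving the resulting functional equation.

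First I would fix $b=(b_{0},\vec{b})\in G$ and write out the left translation $L_{b}:a\mapsto S(b)S(a)=(b_{0}+a_{0},\,\vec{b}+e^{-\lambda b_{0}}\vec{a})$. Its Jacobian matrix $\partial L_{b}/\partial a$ is block diagonal: the entry corresponding to $a_{0}$ is $1$, while the block acting on $\vec{a}$ is the scalar matrix $e^{-\lambda b_{0}}\,\mathbf{1}_{n-1}$. Consequently $|\det L_{b}'|=e^{-\lambda(n-1)b_{0}}$. Imposing $\int\phi(S(b)a)f(a)d^{n}a=\int\phi(a)f(a)d^{n}a$ and changing variables $a'=L_{b}(a)$ with $d^{n}a=e^{\lambda(n-1)b_{0}}d^{n}a'$, one gets the cocycle condition $f(L_{b}(a))=e^{\lambda(n-1)b_{0}}f(a)$. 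Since $L_{b}$ shifts $a_{0}\mapsto a_{0}+b_{0}$, the unique (up to normalization) solution depending only on $a_{0}$ is $f(a)=e^{\lambda(n-1)a_{0}}$, giving $d\mu_{L}(a)=e^{\lambda(n-1)a_{0}}d^{n}a$.

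Next I would treat right translation $R_{b}:a\mapsto S(a)S(b)=(a_{0}+b_{0},\,\vec{a}+e^{-\lambda a_{0}}\vec{b})$. The Jacobian matrix is upper triangular: the $a_{0}$ row has a $1$ on the diagonal and zeros elsewhere, while on the spatial block one gets the identity plus a column of $-\lambda e^{-\lambda a_{0}}b_{i}$ entries coming from $\partial\vec{a}'/\partial a_{0}$. The diagonal consists entirely of ones, so $|\det R_{b}'|=1$. The analogous change of variables then forces the density to be constant, so $d\mu_{R}(a)=d^{n}a$ up to normalization.

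The steps are elementary, so no real obstacle is expected; the only things to be careful about are bookkeeping of the Jacobian (making sure to distinguish the determinant of $L_{b}'$ from its reciprocal in the change-of-variable formula) and checking that the density $e^{\lambda(n-1)a_{0}}$ has the correct exponent, which is fixed by the number $n-1$ of spatial coordinates scaled by $e^{-\lambda b_{0}}$ under $L_{b}$. As a sanity check one recovers $d\mu_{L}=d\mu_{R}=d^{n}a$ in the classical limit $\lambda\to 0$, consistent with the group $G$ becoming abelian.
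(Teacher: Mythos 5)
Your proposal is correct and follows essentially the same route as the paper: both rest on the explicit group law \eqref{eq:g-ops} and a change of variables in the integral, the only difference being that you phrase the Jacobian computation as a functional (cocycle) equation for the density and solve it, whereas the paper directly verifies that $e^{\lambda(n-1)a_{0}}d^{n}a$ (resp.\ $d^{n}a$) is invariant under left (resp.\ right) translations. Your Jacobian factors $e^{-\lambda(n-1)b_{0}}$ for left translation and $1$ for right translation are exactly the ones implicit in the paper's substitutions, so the argument goes through.
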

\begin{proof}
We do the computation for the left invariant case. Using the $(a_{0},\cdots,a_{n-1})$
coordinates and the group operations given in (\ref{eq:g-ops}) we
easily find
\[
\begin{split}\int f(a\cdot b)d\mu_{L}(b) & =\int f(a_{0}+b_{0},\vec{a}+e^{-\lambda a_{0}}\vec{b})e^{\lambda(n-1)b_{0}}d^{n}b\\
 & =\int f(a_{0}+b_{0},\vec{a}+\vec{b})e^{\lambda(n-1)(a_{0}+b_{0})}d^{n}b\\
 & =\int f(b_{0},\vec{b})e^{\lambda(n-1)b_{0}}d^{n}b=\int f(b)d\mu_{L}(b)\ .
\end{split}
\]
The right invariant case is treated similarly.
\end{proof}
We have that $G$ is not a unimodular group, with the modular function
$e^{-\lambda(n-1)a_{0}}$ playing a central role in the following.
We consider the convolution algebra of $G$ with respect to the right
invariant measure, and we identify functions on $G$ with functions
on $\mathbb{R}^{n}$ by the parametrization (\ref{eq:param}). The
convolution algebra is an involutive Banach algebra consisting of
integrable functions on $\mathbb{R}^{n}$ with product $\hat{\star}$
and involution $\hat{*}$ given by
\[
\begin{split}(f\hat{\star}g)(a) & =\int f(a_{0}-a_{0}^{\prime},\vec{a}-e^{-\lambda(a_{0}-a_{0}^{\prime})}\vec{a}^{\prime})g(a_{0}^{\prime},\vec{a}^{\prime})d^{n}a^{\prime}\ ,\\
f^{\hat{*}}(a) & =e^{\lambda(n-1)a_{0}}\overline{f}(-a_{0},-e^{\lambda a_{0}}\vec{a})\ .
\end{split}
\]
We pass from momentum space to configuration space via the Fourier
transform. The star-product and involution are defined, in terms of
the convolution algebra operations, as 
\[
f\star g:=\mathcal{F}^{-1}\left(\mathcal{F}(f)\hat{\star}\mathcal{F}(g)\right)\ ,\qquad f^{*}:=\mathcal{F}^{-1}\left(\mathcal{F}(f)^{\hat{*}}\right)\ .
\]
These formulae are written for clarity using the unitary convention
for the Fourier transform, but in the following we will use the physicists
convention with the $(2\pi)^{n}$ in momentum space. We restrict our
attention to the following space of functions.
\begin{defn}
Denote by $\mathcal{S}_{c}$ the space of Schwartz functions on $\mathbb{R}^{n}$
with compact support in the first variable, that is for $f\in\mathcal{S}_{c}$
we have $\mbox{supp}(f)\subseteq K\times\mathbb{R}^{n-1}$ for some
compact $K\subset\mathbb{R}$. We define $\mathcal{A}=\mathcal{F}(\mathcal{S}_{c})$,
where $\mathcal{F}$ is the Fourier transform on $\mathbb{R}^{n}$.
\end{defn}
On this space we can safely perform all the operations we need. The
next proposition shows that $\mathcal{A}$ is a $*$-algebra and gives
explicit formulae for the star product and the involution. We use
the notation $x=(x_{0},\vec{x})$ and $\vec{x}=(x_{1},\ldots,x_{n-1})$
for the coordinates on $\mathcal{A}$.
\begin{prop}
For $f,g\in\mathcal{A}$ we have
\[
\begin{split}(f\star g)(x) & =\int e^{ip_{0}x_{0}}(\mathcal{F}_{0}f)(p_{0},\vec{x})g(x_{0},e^{-\lambda p_{0}}\vec{x})\frac{dp_{0}}{2\pi}\ ,\\
f^{*}(x) & =\int e^{ip_{0}x_{0}}(\mathcal{F}_{0}\overline{f})(p_{0},e^{-\lambda p_{0}}\vec{x})\frac{dp_{0}}{2\pi}\ .
\end{split}
\]
We have that $f\star g\in\mathcal{A}$ and $f^{*}\in\mathcal{A}$,
so that $\mathcal{A}$ is a $*$-algebra.\end{prop}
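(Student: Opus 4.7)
The plan is to derive both formulas by direct computation from the convolution-algebra definitions of $\hat{\star}$ and $\hat{*}$, applying the inverse Fourier transform and absorbing all scaling factors through carefully chosen changes of variable; the displayed expressions then emerge after recognizing partial Fourier transforms, and closure of $\mathcal{A}$ under these operations is easier to verify on the momentum side, where the class $\mathcal{S}_c$ is defined.

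For the star product, I would set $F=\mathcal{F}f$ and $G=\mathcal{F}g$ and write $(f\star g)(x)$ as the inverse Fourier transform of $(F\hat{\star}G)(p)$, giving a double integral in $p$ and $p'$. The key substitution is $q_0=p_0-p_0'$, $\vec{q}=\vec{p}-e^{-\lambda q_0}\vec{p}'$ at fixed $p'$, whose Jacobian equals one; this decouples the phase $e^{ip\cdot x}$ in such a way that the $p'$-factor is evaluated at the rescaled spatial argument $e^{-\lambda q_0}\vec{x}$. The $p'$-integral then collapses to $g(x_0,e^{-\lambda q_0}\vec{x})$ by inverse Fourier transforming $G$, while the remaining partial spatial integral in $\vec{q}$ yields $(\mathcal{F}_0 f)(q_0,\vec{x})$; renaming $q_0\to p_0$ gives the stated formula.

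For the involution, I would start from $\mathcal{F}(f)^{\hat{*}}(p)=e^{\lambda(n-1)p_0}\overline{(\mathcal{F}f)(-p_0,-e^{\lambda p_0}\vec{p})}$ and change variables $q_0=-p_0$, $\vec{q}=-e^{\lambda p_0}\vec{p}$ inside the inverse Fourier transform. The absolute Jacobian of this map is $e^{\lambda(n-1)q_0}$, which is cancelled exactly by the factor coming from $e^{\lambda(n-1)p_0}$; using the identity $\overline{(\mathcal{F}f)(p)}=(\mathcal{F}\overline{f})(-p)$ together with a further sign flip, one is left with an inverse spatial Fourier transform evaluated at $e^{-\lambda q_0}\vec{x}$, which is precisely $(\mathcal{F}_0\overline{f})(q_0,e^{-\lambda q_0}\vec{x})$. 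The cancellation between the modular function and the geometric Jacobian is the one point where I expect the most care: signs and powers of $e^{\lambda p_0}$ must be tracked consistently, and this is the step I would recheck.

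Finally, to verify $f\star g,\,f^{*}\in\mathcal{A}$ I would argue on the momentum side. If $F,G\in\mathcal{S}_c$ have $p_0$-supports in compact sets $K_1,K_2$, then $(F\hat{\star}G)$ has $p_0$-support in $K_1+K_2$, again compact, while $\mathcal{F}(f)^{\hat{*}}$ merely reflects the $p_0$-support. Schwartz decay in the remaining variables is preserved because, on the compact $p_0$-slab, both the rescaling $\vec{p}\mapsto e^{\pm\lambda p_0}\vec{p}$ and the weight $e^{\lambda(n-1)p_0}$ are smooth with all derivatives uniformly bounded, so composition with them maps $\mathcal{S}_c$ into itself.
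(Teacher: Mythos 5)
Your proposal is correct and follows essentially the same route as the paper: direct computation from the convolution-algebra definitions, applying the inverse Fourier transform, using a change of variables whose Jacobian cancels the modular factor $e^{\lambda(n-1)p_0}$, and recognizing the partial Fourier transform $\mathcal{F}_0$. The paper carries this out explicitly only for the involution and delegates the product (and the closure of $\mathcal{A}$) to the Durhuus--Sitarz reference, so your substitution $q_0=p_0-p_0'$, $\vec{q}=\vec{p}-e^{-\lambda q_0}\vec{p}'$ for the $\star$-product and your support/decay argument for $f\star g,\ f^{*}\in\mathcal{A}$ simply fill in details the paper omits; just track the $(2\pi)^n$ normalization of $\hat{\star}$ consistently when passing to the physicists' convention, as you yourself flag.
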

\begin{proof}
Let us show how this works for the involution. From the definitions
we get
\[
\begin{split}f^{*}(x) & =\mathcal{F}^{-1}(\mathcal{F}(f)^{\hat{*}})(x)=\int e^{ipx}(\mathcal{F}(f)^{\hat{*}})(p)\frac{d^{n}p}{(2\pi)^{n}}\\
 & =\int e^{ipx}e^{(n-1)\lambda p_{0}}(\overline{\mathcal{F}f})(-p_{0},-e^{\lambda p_{0}}\vec{p})\frac{d^{n}p}{(2\pi)^{n}}\ .
\end{split}
\]
Now using the change of variables $\vec{p}\to e^{-\lambda p_{0}}\vec{p}$
we find
\[
f^{*}(x)=\int e^{ip_{0}x_{0}}e^{ie^{-\lambda p_{0}}\vec{p}\cdot\vec{x}}(\overline{\mathcal{F}f})(-p)\frac{d^{n}p}{(2\pi)^{n}}=\int e^{ip_{0}x_{0}}e^{ie^{-\lambda p_{0}}\vec{p}\cdot\vec{x}}(\mathcal{F}\bar{f})(p)\frac{d^{n}p}{(2\pi)^{n}}\ .
\]
Finally performing the Fourier transform in the $\vec{p}$ variables
we find
\[
f^{*}(x)=\int e^{ip_{0}x_{0}}(\mathcal{F}_{0}\bar{f})(p_{0},e^{-\lambda p_{0}}\vec{x})\frac{dp_{0}}{2\pi}
\]
The product can be computed in a similar way. For more details see
\cite{sitarz}.
\end{proof}
A nice property of this algebra is that it comes naturally with an
action of the $\kappa$-Poincaré algebra $\mathcal{P}_{\kappa}$ on
it. In \cite{sitarz} it was proven that $\mathcal{A}$ is a left
$\mathcal{P}_{\kappa}$-module $*$-algebra, which means that the
action of the $\kappa$-Poincaré symmetries on $\mathcal{A}$ preserves
the Hopf algebraic structure. In particular the action of the translations
sector is elementary, with $(P_{\mu}\triangleright f)(x)=-i(\partial_{\mu}f)(x)$
and $(\mathcal{E}\triangleright f)(x)=f(x_{0}+i\lambda,\vec{x})$.
This remains true for the $n$-dimensional case.

Now we can introduce a Hilbert space by using the GNS-construction
for $\mathcal{A}$, after the choice of some weight $\omega$. There
is a natural choice which respects the symmetries of the $\kappa$-Poincaré
Hopf algebra, see \cite{sitarz,flavio}: it is simply given by the
integral of a function $f\in\mathcal{A}$ with respect to the Lebesgue
measure over $\mathbb{R}^{n}$, and we denote it by $\omega$. Contrarily
to the commutative case it does not satisfy the trace property.
\begin{prop}
For $f,g\in\mathcal{A}$ we have the twisted trace property
\[
\int(f\star g)(x)d^{n}x=\int(\sigma^{n-1}(g)\star f)(x)d^{n}x\ ,
\]
where we define $\sigma(g)(x):=g(x_{0}+i\lambda,\vec{x})$.\end{prop}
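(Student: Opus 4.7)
The plan is to reduce both sides to the same expression by working directly with the integral formulas for the star product, exploiting the Fourier transform in the zeroth variable.

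First I would write out
\[
\int (f\star g)(x)\, d^n x = \int e^{ip_0 x_0}(\mathcal{F}_0 f)(p_0,\vec{x})\, g(x_0, e^{-\lambda p_0}\vec{x})\, \frac{dp_0}{2\pi}\, d^n x.
\]
The $x_0$-integral produces $(\mathcal{F}_0 g)(-p_0, e^{-\lambda p_0}\vec{x})$, giving
\[
\int (f\star g)(x)\, d^n x = \int (\mathcal{F}_0 f)(p_0,\vec{x})\,(\mathcal{F}_0 g)(-p_0,e^{-\lambda p_0}\vec{x})\, \frac{dp_0}{2\pi}\, d^{n-1}\vec{x}.
\]

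Next I would carry out the analogous calculation for the right-hand side. The key auxiliary fact is the Fourier transform of the shifted function: since elements of $\mathcal{A}$ arise as Fourier transforms of compactly supported (in the first variable) Schwartz functions, Paley--Wiener lets me translate the contour in $x_0$, and a direct computation gives
\[
(\mathcal{F}_0\, \sigma^{n-1}(g))(p_0,\vec{x}) = e^{-(n-1)\lambda p_0}\,(\mathcal{F}_0 g)(p_0,\vec{x}).
\]
Plugging this into the star-product formula for $\sigma^{n-1}(g)\star f$ and integrating in $x$, the scaling change of variables $\vec{x}\mapsto e^{\lambda p_0}\vec{x}$ contributes a Jacobian $e^{\lambda(n-1)p_0}$ that exactly cancels the factor $e^{-(n-1)\lambda p_0}$. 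After performing the $x_0$-integral and reflecting $p_0\mapsto -p_0$, I obtain the same expression as above.

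The two cancellations -- the modular factor $e^{-(n-1)\lambda p_0}$ produced by the twist $\sigma^{n-1}$ and the Jacobian $e^{\lambda(n-1)p_0}$ coming from the non-unimodular structure of $G$ -- are the whole content of the statement, and reflect the choice of the exponent $n-1$ on $\sigma$. The only genuine subtlety is the contour shift used to compute $\mathcal{F}_0 \sigma^{n-1}(g)$; this is justified by the Paley--Wiener characterization of $\mathcal{A}$, which ensures that $g$ extends to an entire function of $x_0$ with appropriate decay, so no boundary terms arise when shifting $x_0\mapsto x_0-i(n-1)\lambda$. All other manipulations (Fubini, changes of variables) are routine on $\mathcal{A}$.
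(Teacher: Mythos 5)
Your proposal is correct and rests on exactly the same two mechanisms as the paper's proof: the factor $e^{-(n-1)\lambda p_0}$ produced by the analytic shift in $x_0$ (justified, as you say, by the compact support of $\mathcal{F}g$ in the first variable) cancelling the Jacobian $e^{(n-1)\lambda p_0}$ of the dilation $\vec{x}\mapsto e^{\lambda p_0}\vec{x}$. The only difference is organizational: you integrate out $x_0$ and reduce both sides to the common expression $\int(\mathcal{F}_0 f)(p_0,\vec{x})(\mathcal{F}_0 g)(-p_0,e^{-\lambda p_0}\vec{x})\,\frac{dp_0}{2\pi}\,d^{n-1}\vec{x}$, whereas the paper transforms $\int(f\star g)$ directly into $\int(\sigma^{n-1}(g)\star f)$ by the same change of variables, contour shift and relabelings; both are equally valid.
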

\begin{proof}
First we use the change of variables $\vec{x}\to e^{\lambda p_{0}}\vec{x}$
and obtain
\[
\begin{split}\int(f\star g)(x)d^{n}x & =\int\int e^{ip_{0}x_{0}}(\mathcal{F}_{0}f)(p_{0},\vec{x})g(x_{0},e^{-\lambda p_{0}}\vec{x})\frac{dp_{0}}{2\pi}d^{n}x\\
 & =\int\int e^{ip_{0}x_{0}}e^{(n-1)\lambda p_{0}}(\mathcal{F}_{0}f)(p_{0},e^{\lambda p_{0}}\vec{x})g(x_{0},\vec{x})d^{n}x\frac{dp_{0}}{2\pi}\ .
\end{split}
\]
Now, using the analiticity of the functions of $\mathcal{A}$ in the
first variable, we can shift $x_{0}\to x_{0}+i(n-1)\lambda$ to obtain
the action of $\sigma^{n-1}$ on $g$, that is 
\[
\begin{split}\int(f\star g)(x)d^{n}x & =\int\int e^{ip_{0}x_{0}}(\mathcal{F}_{0}f)(p_{0},e^{\lambda p_{0}}\vec{x})g(x_{0}+i(n-1)\lambda,\vec{x})d^{n}x\frac{dp_{0}}{2\pi}\\
 & =\int\int e^{ip_{0}x_{0}}(\mathcal{F}_{0}f)(p_{0},e^{\lambda p_{0}}\vec{x})\sigma^{n-1}(g)(x_{0},\vec{x})d^{n}x\frac{dp_{0}}{2\pi}\ .
\end{split}
\]
It only remains to rewrite this expression in terms of the $\star$-product.
Writing explicitely the Fourier transform $\mathcal{F}_{0}f$ we have
\[
\int(f\star g)(x)d^{n}x=\int\int e^{ip_{0}x_{0}}\int e^{-ip_{0}y_{0}}f(y_{0},e^{\lambda p_{0}}\vec{x})\sigma^{n-1}(g)(x_{0},\vec{x})dy_{0}d^{n}x\frac{dp_{0}}{2\pi}\ .
\]
We need to do some rearranging: change $p_{0}\to-p_{0}$, relabel
$y_{0}\leftrightarrow x_{0}$ and exchange the order of the $x_{0}$
and $y_{0}$ integral. The result of these operations is
\[
\int(f\star g)(x)d^{n}x=\int\int e^{ip_{0}x_{0}}f(x_{0},e^{-\lambda p_{0}}\vec{x})\int e^{-ip_{0}y_{0}}\sigma^{n-1}(g)(y_{0},\vec{x})dy_{0}d^{n}x\frac{dp_{0}}{2\pi}\ .
\]
But now the last integral is just the Fourier transform of $\sigma^{n-1}(g)$
in the $y_{0}$ variable, so
\[
\int(f\star g)(x)d^{n}x=\int\int e^{ip_{0}x_{0}}(\mathcal{F}_{0}\sigma^{n-1}(g))(p_{0},\vec{x})f(x_{0},e^{-\lambda p_{0}}\vec{x})\frac{dp_{0}}{2\pi}d^{n}x\ .
\]
Finally we observe that the right hand side is just the integral of
the function $(\sigma^{n-1}(g)\star f)(x)$, which proves the result.
\end{proof}
As observed in \cite{modulark} this property can be rephrased as
a KMS condition for $\omega$.
\begin{prop}
The weight $\omega$ satisfies the KMS condition with respect to the
modular group $\sigma^{\omega}$, defined by $(\sigma_{t}^{\omega}f)(x_{0},\vec{x}):=f(x_{0}-t(n-1)\lambda,\vec{x})$.
The associated modular operator is $\Delta_{\omega}=e^{-(n-1)\lambda P_{0}}$,
where $P_{0}=-i\partial_{0}$.
\end{prop}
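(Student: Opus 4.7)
The plan is to extract the KMS condition directly from the twisted trace identity of the preceding proposition, and then to read off the modular operator from the explicit form of $\sigma_t^\omega$.

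The first step is to verify that $\sigma_t^\omega(f)(x_0,\vec{x}) := f(x_0 - t(n-1)\lambda, \vec{x})$ defines a one-parameter group of $*$-automorphisms of $\mathcal{A}$. A direct check using the integral formula for the $\star$-product shows that an $x_0$-translation only contributes a phase $e^{ip_0 c}$ in the $p_0$-Fourier variable, which cancels when both factors are translated simultaneously; the analogous check for the involution is equally short. Since the elements of $\mathcal{A} = \mathcal{F}(\mathcal{S}_c)$ are entire in $x_0$ with compactly supported $p_0$-Fourier transform, this group extends unambiguously to complex values of the parameter.

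Next I would establish the KMS relation in the form $\omega(f \star g) = \omega(\sigma_{-i}^\omega(g) \star f)$ for $f,g \in \mathcal{A}$. At $t = -i$ one has $\sigma_{-i}^\omega(g)(x_0,\vec{x}) = g(x_0 + i(n-1)\lambda, \vec{x}) = \sigma^{n-1}(g)(x)$, with $\sigma$ the operator of the previous proposition, and so the twisted trace identity is precisely this KMS relation. To complete the KMS characterization one must also verify that the function $z \mapsto F_{f,g}(z) := \omega(f \star \sigma_z^\omega(g))$ is holomorphic in the strip $\{0 < \Im z < 1\}$ and continuous up to its closure, with boundary values $F_{f,g}(t) = \omega(f \star \sigma_t^\omega(g))$ and $F_{f,g}(t+i) = \omega(\sigma_t^\omega(g) \star f)$. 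Both assertions reduce to differentiation under the integral together with dominated convergence, justified by the compact support of $\mathcal{F}_0 g$ in $p_0$.

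For the modular operator, since $P_0 = -i\partial_0$ the unitary group $e^{-it(n-1)\lambda P_0}$ acts as $(e^{-it(n-1)\lambda P_0} f)(x_0,\vec{x}) = f(x_0 - t(n-1)\lambda, \vec{x})$, which agrees with $\sigma_t^\omega(f)$. Reading off, $\Delta_\omega^{it} = e^{-it(n-1)\lambda P_0}$ and hence $\Delta_\omega = e^{-(n-1)\lambda P_0}$. The only nontrivial point in the argument is the analytic continuation of $F_{f,g}$ into the strip, but for elements of $\mathcal{A}$ this reduces to standard uniform estimates coming from the compact $p_0$-support of the momentum-space data; everything else is algebraic manipulation of the twisted trace identity already at hand.
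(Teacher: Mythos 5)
Your proposal is correct and follows essentially the same route the paper intends: the paper states this proposition without proof, presenting it as a direct rephrasing of the twisted trace identity $\int f\star g=\int\sigma^{n-1}(g)\star f$ established just before (with details deferred to \cite{modulark}), and your argument is precisely that rephrasing, with the translation-automorphism property, the strip analyticity of $F_{f,g}$ (which is indeed automatic since elements of $\mathcal{A}=\mathcal{F}(\mathcal{S}_{c})$ are entire in $x_{0}$ with compactly supported $\mathcal{F}_{0}$-transform, so $\sigma_{z}^{\omega}$ preserves $\mathcal{A}$ for all complex $z$), and the identification $\Delta_{\omega}^{it}=e^{-it(n-1)\lambda P_{0}}$ supplied explicitly. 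The only point worth noting is purely conventional: whether the boundary relation is labelled by $\sigma_{-i}^{\omega}$ (as you write, consistently with the definition of $\sigma_{t}^{\omega}$ in the statement) or by $\sigma_{i}^{\omega}$ (as the paper writes informally in Section 3) depends on the chosen sign convention for the modular group, so this is not a gap in your argument.
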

On the Hilbert space $\mathcal{H}$, obtained by the GNS-construction
for $\omega$, the algebra $\mathcal{A}$ acts via left multiplication,
that is $\pi(f)\psi:=f\star\psi$. In the following we omit the representation
symbol $\pi$ and just write $f$ for the operator of left multiplication
by this function.

It is important to point out that the Hilbert space $\mathcal{H}$
is not $L^{2}(\mathbb{R}^{n})$. On the other hand, using the fact
that $\mathcal{A}$ is dense in both Hilbert spaces, one can easily
find a unitary operator between the two. One can also find, using
this unitary operator, the Schwartz kernel of a certain class of operators
which will be of interest to us in the following. These results are
the content of the next proposition, for details about the derivation
see \cite{modulark}.
\begin{prop}
\label{prop:kernel}The Hilbert space $\mathcal{H}$ obtained by the
GNS-construction for $\omega$ is unitarily equivalent to $L^{2}(\mathbb{R}^{n})$,
via the unitary operator given by
\[
(Uf)(x)=\int e^{ip_{0}x_{0}}(\mathcal{F}_{0}\overline{f})(p_{0},e^{-\lambda p_{0}}\vec{x})\frac{dp_{0}}{2\pi}\ .
\]
Consider now the operator $U\pi(f)g(P)U^{-1}$ acting on $L^{2}(\mathbb{R}^{n})$,
where $f\in\mathcal{A}$ and $P_{\mu}=-i\partial_{\mu}$. Then its
Schwartz kernel is given by
\[
K(x,y)=\int e^{ip(x-y)}(Uf)(x_{0},e^{\lambda p_{0}}\vec{x})g(p_{0},e^{-\lambda p_{0}}\vec{p})\frac{d^{n}p}{(2\pi)^{n}}\ .
\]

\end{prop}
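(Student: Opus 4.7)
The approach is direct computation, leveraging the explicit formulas already derived for the star-product, the involution, and the KMS property of $\omega$, and following the template of the two-dimensional case treated in \cite{modulark}. The proposition splits into two independent claims, an identification of Hilbert spaces and a kernel formula, which are handled in turn.

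For the unitary equivalence, I would start from the GNS inner product on $\mathcal{A}$, namely $\langle f, g\rangle_{\mathcal{H}} = \omega(f^{*}\star g) = \int (f^{*}\star g)(x)\, d^{n}x$. Expanding this using the explicit formulas for $\star$ and $*$ and performing the $x_0$-integral produces a $\delta(p_0 + p_0')$ factor that collapses the double integral over $p_0, p_0'$; the remaining $\vec{x}$-integral reduces, after the change of variable $\vec{x}\mapsto e^{\lambda p_0}\vec{x}$, to the standard Plancherel pairing. The resulting expression is directly identified as $\int \overline{(Uf)(x)}\, (Ug)(x)\, d^{n}x$ for the $U$ given in the statement, showing that $U$ is an isometry from $\mathcal{A}\subset\mathcal{H}$ into $L^{2}(\mathbb{R}^{n})$. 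Since $\mathcal{A}$ is dense in $\mathcal{H}$ by construction and $U(\mathcal{A})$ is dense in $L^{2}(\mathbb{R}^{n})$ (being the image of $\mathcal{S}_c$ under an invertible composition of Fourier transform in $x_0$ and a smooth rescaling of $\vec{x}$), the isometry extends by continuity to a unitary.

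For the Schwartz kernel, I would push a test function $\psi\in\mathcal{A}$ through $U\pi(f)g(P)U^{-1}$ one step at a time: apply $U^{-1}$ to return into $\mathcal{H}$; use the Fourier representation of $g(P)$ to obtain an integral against $\int e^{ip(x-y)}g(p)\frac{d^{n}p}{(2\pi)^{n}}$; act with $\pi(f)$ via the left star-multiplication formula established earlier; and finally apply $U$. Rearranging the multiple integral and reading off the coefficient of $\psi(y)$ gives the claimed expression for $K(x,y)$; the arguments $e^{\lambda p_0}\vec{x}$ inside $Uf$ and $e^{-\lambda p_0}\vec{p}$ inside $g$ arise precisely from composing the rescaling in $U$ with the one in the $\star$-product.

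The main obstacle in both parts is bookkeeping: the factors $\vec{p}\mapsto e^{\pm\lambda p_0}\vec{p}$ appear in the star-product, in the involution, and in $U$ itself, and they must combine in the correct way to yield the stated formulas. Convergence is not delicate, because the compact support in $x_0$ of elements of $\mathcal{S}_c$ makes the relevant Fourier transforms entire in the first variable with controlled growth, justifying both the complex shifts invoked in the KMS computation and all uses of Fubini. Since the structure of the computation is identical to that of the two-dimensional case in \cite{modulark}, the passage to general $n$ introduces no new conceptual difficulty, only additional transverse directions on which the formulas act trivially except through the modular rescaling $e^{\pm\lambda p_0}$.
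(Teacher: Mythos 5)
Your approach is essentially the one the paper intends: the paper itself gives no proof of this proposition, remarking only that density of $\mathcal{A}$ in both Hilbert spaces yields the unitary and deferring the derivation to \cite{modulark}, and your plan (computing the GNS inner product via the explicit $\star$ and $*$ formulas, extending by density, then obtaining the kernel by composing $U^{-1}$, $g(P)$, left $\star$-multiplication by $f$, and $U$) is precisely that computation carried out in $n$ dimensions. One bookkeeping caution: since $U$ as written involves $\overline{f}$, the identity you actually land on is $\int (Uf)\,\overline{(Ug)}\,d^{n}x=\omega(f^{*}\star g)$ rather than its conjugate, so the placement of complex conjugations (equivalently, the GNS inner-product convention) must be tracked carefully when identifying the $L^{2}$ pairing.
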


\subsection{Dirac operator and differential calculus}

The next step is the introduction of a self-adjoint operator $D$
satisfying certain conditions, the so-called Dirac operator. From
the analysis given in \cite{modulark} we know that to obtain a boundedness
condition for $D$ we need to use a twisted commutator \cite{type III}.
This amounts to introducing an automorphism $\sigma$ of the algebra
$\mathcal{A}$, the twist. Then for each $f\in\mathcal{A}$ the operator
$[D,f]_{\sigma}=Df-\sigma(f)D$ should be bounded. One needs to find
what are the possible choices for $D$ and the automorphism $\sigma$
such that this condition is fulfilled. We consider some additional
assumptions which are related to the symmetries and the classical
limit, which we state precisely below. The analysis for the general
$n$-dimensional case is essentially identical to the two-dimensional
case, so we skip the computations and refer to \cite{modulark} for
details.

As in the classical case we enlarge the Hilbert space to accomodate
for spinors. Therefore we consider $\mathcal{H}=\mathcal{H}_{r}\otimes\mathbb{C}^{[n/2]}$,
where $\mathcal{H}_{r}$ is the Hilbert space previously introduced.
Here $[n/2]$ is the dimension of the spinor bundle on $\mathbb{R}^{n}$,
and we use the notation $\Gamma^{\mu}$ for the matrix representation
of the Clifford algebra, which satisfy $\{\Gamma^{\mu},\Gamma^{\nu}\}=2\delta^{\mu\nu}$.
Then we can write $D$ in the form $D=\Gamma^{\mu}\hat{D}_{\mu}$,
where $\hat{D}_{\mu}$ are self-adjoint operators on $\mathcal{H}_{r}$.

Now we state our assumptions for the Dirac operator $D$ and the automorphism
$\sigma$. We denote by $\rho$ the map from the extended momentum
algebra $\mathcal{T}_{\kappa}$ to (possibly unbounded) operators
on $\mathcal{H}$, see \cite{modulark}. Since $D$ should be determined
by the symmetries, we assume that $\hat{D}_{\mu}=\rho(D_{\mu})$ for
some $D_{\mu}\in\mathcal{T}_{\kappa}$, which is basically the requirement
of equivariance. Similarly we assume that $\sigma$ is given by $\sigma(f)=\sigma\triangleright f$
for some $\sigma\in\mathcal{T}_{\kappa}$, which to be an automorphism
must have a coproduct of the form $\Delta(\sigma)=\sigma\otimes\sigma$.
Since the parameter $\lambda$ is a \textit{physical quantity} of
the model, which has the dimension of a length, the Dirac operator
must have the dimension of an inverse length. Moreover we require
that $D$ reduces to the classical Dirac operator in the limit $\lambda\to0$,
by which we mean that for all $\psi\in\mathcal{A}$ we should have
$\lim\hat{D}_{\mu}\psi=\hat{P}_{\mu}\psi$.
\begin{prop}
Under the assumptions given above, we have that there is a unique
operator $D$ and a unique automorphism $\sigma$ such that $[D,f]_{\sigma}$
is bounded for every $f\in\mathcal{A}$. They are given by $D=\Gamma^{\mu}D_{\mu}$,
with $D_{0}=\lambda^{-1}(1-e^{-\lambda P_{0}})$ and $D_{j}=P_{j}$,
while $\sigma=e^{-\lambda P_{0}}$.
\end{prop}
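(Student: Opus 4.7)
The plan is to reduce the problem to a Hopf-algebraic Leibniz-rule matching, exploiting the fact that any $T\in\mathcal{T}_{\kappa}$ acts on a $\star$-product via its coproduct: $T\triangleright(f\star\psi)=(T^{(1)}\triangleright f)\star(T^{(2)}\triangleright\psi)$ on $\mathcal{A}$. The strategy is to first narrow down the possible forms of $\sigma$ and of each $D_\mu$ using the structural assumptions (equivariance, group-likeness, dimension, classical limit), and then to extract the uniqueness directly from the twisted-Leibniz expansion of $[D_\mu,f]_\sigma$.

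First I would classify $\sigma$. The condition $\Delta(\sigma)=\sigma\otimes\sigma$ forces $\sigma$ to be group-like in $\mathcal{T}_{\kappa}$, and the only group-like elements in the extended momentum algebra generated by $P_\mu$ and $\mathcal{E}^{\pm1}$ are the powers $\mathcal{E}^s$. Equivariance and dimensional analysis then restrict $D_j$ to have the form $D_j=P_j\,G(\mathcal{E})$ with $G(1)=1$ (the classical-limit requirement), and $D_0$ to be a function $H(\mathcal{E})$ of $\mathcal{E}$ alone with $\lambda^{-1}(H(1)-H(\mathcal{E}))\to P_0$ as $\lambda\to 0$.

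For the spatial part I would compute
\[
\Delta(D_j)=\Delta(P_j)\Delta(G(\mathcal{E}))=D_j\otimes G(\mathcal{E})+\mathcal{E}G(\mathcal{E})\otimes D_j,
\]
which gives
\[
[D_j,f]_\sigma\psi=(D_j\triangleright f)\star(G(\mathcal{E})\triangleright\psi)+\bigl((\mathcal{E}G(\mathcal{E}))\triangleright f-\sigma(f)\bigr)\star(D_j\triangleright\psi).
\]
Since $D_j\triangleright\psi$ is generically unbounded, cancellation of the second term forces $\sigma=\mathcal{E}G(\mathcal{E})$. The first term then defines a bounded operator in $\psi$ only if $G(\mathcal{E})$ acts as a bounded multiplier, i.e.\ $G\equiv 1$, yielding $D_j=P_j$ and $\sigma=\mathcal{E}$. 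For the temporal part, with $\sigma=\mathcal{E}$ now fixed, the group-likeness of $\mathcal{E}$ gives
\[
\Delta(H(\mathcal{E}))=\sum_k c_k\,\mathcal{E}^k\otimes\mathcal{E}^k,
\]
and expanding the twisted commutator shows that the coefficient of the unbounded operator $D_0\triangleright\psi$ is $(\mathcal{E}\triangleright f-\sigma(f))$-type combinations that vanish identically only when $H(\mathcal{E})=\lambda^{-1}(1-\mathcal{E})$. The key algebraic identity is the telescoping $\lambda D_0+\mathcal{E}=1$, which collapses $[D_0,f]_\sigma\psi$ to $(D_0\triangleright f)\star\psi$, manifestly bounded because $D_0\triangleright f\in\mathcal{A}$.

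The main obstacle is to rule out exotic candidates: one must show that no function $G$ other than the constant $1$ and no function $H$ other than the affine $\lambda^{-1}(1-\mathcal{E})$ can make the twisted commutator bounded, even after compensating by a different group-like $\sigma$. The cleanest way to close this gap is to observe that any other choice produces an unbounded residual term proportional to $D_\mu\psi$ whose coefficient is an operator that cannot be realised as left multiplication by an element of $\mathcal{A}$; the detailed case analysis is carried out for $n=2$ in \cite{modulark} and is essentially identical here, which is why the statement can be invoked without further computation.
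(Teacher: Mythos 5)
The paper itself offers no proof of this proposition: it states that the analysis is identical to the two-dimensional case and defers entirely to \cite{modulark}. Your sketch follows the same underlying strategy as that reference, namely expanding the twisted commutator through the coproduct action $T\triangleright(f\star\psi)=(T^{(1)}\triangleright f)\star(T^{(2)}\triangleright\psi)$ and forcing the coefficient of the unbounded piece to cancel, and your verification of sufficiency is correct: with $D_{0}=\lambda^{-1}(1-\mathcal{E})$, $D_{j}=P_{j}$, $\sigma=\mathcal{E}$ the twisted commutator collapses to left multiplication, $[D,f]_{\sigma}=\Gamma^{\mu}(D_{\mu}\triangleright f)$, exactly as used later in the paper.

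As a proof of \emph{uniqueness}, however, there are concrete gaps. First, the displayed identity $\Delta(P_{j}G(\mathcal{E}))=D_{j}\otimes G(\mathcal{E})+\mathcal{E}G(\mathcal{E})\otimes D_{j}$ is false unless $G$ is a single monomial: for $G(\mathcal{E})=\sum_{k}g_{k}\mathcal{E}^{k}$ one has $\Delta(G(\mathcal{E}))=\sum_{k}g_{k}\,\mathcal{E}^{k}\otimes\mathcal{E}^{k}\neq G(\mathcal{E})\otimes G(\mathcal{E})$, so the expansion of $[D_{j},f]_{\sigma}$ produces one term per monomial and the cancellation must be argued coefficientwise, as you in fact do for $H(\mathcal{E})$. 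Second, the reduction to the ansatz $D_{j}=P_{j}G(\mathcal{E})$, $D_{0}=H(\mathcal{E})$ is not implied by the stated assumptions: equivariance, dimensional analysis and the classical limit still allow, for instance, a term $\lambda\vec{P}^{\,2}K(\mathcal{E})$ in $D_{0}$ or $P_{j}(\lambda^{2}\vec{P}^{\,2})^{m}G(\mathcal{E})$ in $D_{j}$, and excluding such terms is precisely part of the uniqueness claim. Third, the step ``bounded only if $G(\mathcal{E})$ acts as a bounded multiplier, i.e.\ $G\equiv1$'' conflates boundedness of the composition $\pi(D_{j}\triangleright f)\,G(\mathcal{E})$ with boundedness of $G(\mathcal{E})$ alone; what is needed is an operator-theoretic argument (e.g.\ via the kernel formula of Proposition \ref{prop:kernel}) that no nontrivial combination $\sum_{k}\pi(h_{k})\,\mathcal{E}^{k}P^{\alpha}$ with $(k,\alpha)\neq(0,0)$ extends to a bounded operator for generic $f\in\mathcal{A}$. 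You acknowledge these issues only by deferring the ``exotic candidates'' to \cite{modulark}; since that citation is where all the real work lies (as it is for the paper itself), your text is a reasonable outline of the known argument but not a self-contained proof of the uniqueness statement.
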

Notice that formally for $\lambda\to0$ we obtain the usual Dirac
operator on $\mathbb{R}^{n}$. We have the interesting relation $\Delta_{\omega}D^{2}=C$,
where $\Delta_{\omega}$ is the modular operator of the weight $\omega$
and $C$ is the first Casimir of the $\kappa$-Poincaré algebra, which
is given by
\[
C=\frac{4}{\lambda^{2}}\sinh^{2}\left(\frac{\lambda P_{0}}{2}\right)+\sum_{j=1}^{n-1}e^{\lambda P_{0}}P_{j}^{2}\ .
\]

Now we discuss some aspects of the differential calculus associated
with the operator $D$. For a spectral triple $(\mathcal{A},\mathcal{H},D)$
one defines the $\mathcal{A}$-bimodule $\Omega_{D}^{1}$ of one-forms
as the linear span of operators of the form $a[D,b]$, with $a,b\in\mathcal{A}$.
Then $d(a)=[D,a]$ is a derivation of $\mathcal{A}$ with values in
$\Omega_{D}^{1}$, that is $d(ab)=d(a)b+ad(b)$, which immediately
follows from the properties of the commutator. In the twisted case
this definition must be modified, since we have
\[
[D,ab]_{\sigma}=[D,a]_{\sigma}b+\sigma(a)[D,b]_{\sigma}\ .
\]
The necessary modification is very simple \cite{type III}. One simply
defines $\Omega_{D}^{1}$ to be the linear span of operators of the
form $a[D,b]_{\sigma}$, with the bimodule structure given by $a\cdot[D,b]_{\sigma}\cdot c=\sigma(a)[D,b]_{\sigma}c$.
Then it is obvious that $d_{\sigma}(a)=[D,a]_{\sigma}$ is a derivation
of $\mathcal{A}$ with values in $\Omega_{D}^{1}$.

In the non-compact case, already at the untwisted level, it is not
completely clear how one should generalize this notion. One can replace
the algebra $\mathcal{A}$ with some unitization, as done in \cite{moyal}.
However in this case there is no analogue of the one-form $dx^{\mu}$,
since the function $x^{\mu}$ does not belong to $\mathcal{A}$ or
some unitization of it. Nevertheless, it is clear that in the commutative
case $[D,x^{\mu}]$ extends to a bounded operator, in particular it
is equal to $-i\Gamma^{\mu}$. This is also the case in this non-commutative
setting: indeed notice that for any $f\in\mathcal{A}$ we have the
equality
\[
[D,f]_{\sigma}=\Gamma^{\mu}(D_{\mu}\triangleright f)\ ,
\]
where $D_{0}=\lambda^{-1}(1-e^{-\lambda P_{0}})$ and $D_{j}=P_{j}$.
Now it is easy to see that the twisted commutator $[D,x^{\mu}]_{\sigma}$
extends to a bounded operator and in particular $[D,x^{\mu}]_{\sigma}=-i\Gamma^{\mu}$,
as in the commutative case. Adopting the natural notation $df=[D,f]_{\sigma}$,
we can write $df=dx^{\mu}(iD_{\mu}\triangleright f)$. Then from the
bimodule structure on $\Omega_{D}^{1}$ it then follows that 
\[
df=dx^{\mu}\cdot(iD_{\mu}\triangleright f)=\sigma^{-1}(iD_{\mu}\triangleright f)\cdot dx^{\mu}\ .
\]

In \cite{mink-diff} the introduction of bicovariant differential
calculi on $\kappa$-Minkowski space was investigated. It follows
from our construction that the differential calculus defined by the
operator $D$ is an example of such a bicovariant differential calculus.
We have the relation
\[
x^{\mu}\cdot dx^{\nu}-dx^{\nu}\cdot x^{\mu}=\sigma(x^{\mu})dx^{\nu}-dx^{\nu}x^{\mu}=i\lambda\delta_{0}^{\mu}dx^{\nu}\ .
\]
Therefore in the notation of \cite{mink-diff} we obtain $[x^{\mu},dx^{\nu}]=iA_{\rho}^{\mu\nu}dx^{\rho}$
with $A_{\rho}^{\mu\nu}=\lambda\delta_{0}^{\mu}\delta_{\rho}^{\nu}$.

\section{Spectral dimension}

\subsection{The spectral dimension}

In \cite{modulark} it was shown, for the two-dimensional case, that
the ingredients introduced in the previous section do not give a finitely
summable (twisted) spectral triple. We can try to interpret this result
with the following heuristic argument: suppose we did find a spectral
dimension $n$, coinciding with the classical dimension. Then, from
the general properties of twisted spectral triples, it would follow
that $\varphi(ab)=\varphi\left(\sigma^{n}(b)a\right)$, where $\sigma(a)=e^{-\lambda P_{0}}ae^{\lambda P_{0}}$
and $\varphi$ is the non-commutative integral (defined, for example,
in terms of the Dixmier trace). The weight $\omega$, on the other
hand, satisfies $\omega(f\star g)=\omega(\sigma_{i}^{\omega}(g)\star f)$,
where $\sigma_{i}^{\omega}(a)=e^{-(n-1)\lambda P_{0}}ae^{(n-1)\lambda P_{0}}$.
Therefore we have a mismatch between the modular properties of the
weight $\omega$ and the integral $\varphi$, which shows that we
can not recover the weight $\omega$ from the non-commutative integral.

This argument leaves open the possibility that this could happen if
the spectral dimension were equal to $n-1$, but this is shown not
to be the case by the explicit calculation. In \cite{modulark} it
was argued that one has to use a weight to obtain finite summability,
as in the framework of modular spectral triples \cite{modular1,modular2,modular3}.
The relevant definition for us is the following.
\begin{defn}
Let $(\mathcal{A},\mathcal{H},D)$ be a non-compact modular spectral
triple with weight $\Phi$. We say that it is finitely summable and
call $p$ the \textit{spectral dimension} if the following quantity
exists 
\[
p:=\inf\{s>0:\forall a\in\mathcal{A},a\geq0,\ \Phi\left(a(D^{2}+1)^{-s/2}\right)<\infty\}\ .
\]

\end{defn}
We can choose our weight to be of the form $\Phi(\cdot)=\mathrm{Tr}(\Delta_{\Phi}\cdot)$,
where $\Delta_{\Phi}$ is a positive and invertible operator. We call
it the \textit{modular operator} associated to the weight $\Phi$,
and denote the corresponding modular group by $\sigma^{\Phi}$. As
we discussed above, the mismatch of one power of $e^{-\lambda P_{0}}$
suggests setting $\Delta_{\Phi}=e^{-\lambda P_{0}}$ as the modular
element. It is instructive to consider a slightly more general situation,
which we discuss in the following proposition.
\begin{prop}
\label{prop:spec-dim}Let $\Phi_{t}(\cdot)=\mathrm{Tr}(\Delta_{\Phi}^{t}\cdot)$
be the weight with modular operator $\Delta_{\Phi}^{t}=e^{-t\lambda P_{0}}$.
Then, for any $f\in\mathcal{A}$, we have $\Phi_{t}\left(f(D^{2}+\mu^{2})^{-s/2}\right)<\infty$
if and only if $t>0$ and $s>n-1+t$. In other words, the spectral
dimension exists for $t>0$ and is given by $p=n-1+t$.\end{prop}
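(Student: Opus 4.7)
The strategy is to reduce the computation of $\Phi_t\bigl(f(D^2+\mu^2)^{-s/2}\bigr)$ to an explicit Lebesgue integral in momentum variables, and then analyze its convergence at $p_0 \to \pm\infty$ and at $|\vec q|\to\infty$ separately.

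First I would use cyclicity of the trace together with the fact that $\Delta_\Phi^t = e^{-t\lambda P_0}$ commutes with $D^2$ (both are functions of the commuting operators $P_\mu$) to rewrite
\[
\Phi_t\bigl(f(D^2+\mu^2)^{-s/2}\bigr) = \mathrm{Tr}\bigl(\pi(f)\, h(P)\bigr), \qquad h(P) := e^{-t\lambda P_0}\bigl(D^2(P)+\mu^2\bigr)^{-s/2}.
\]
Since the $D_\mu$'s commute and $\{\Gamma^\mu,\Gamma^\nu\}=2\delta^{\mu\nu}$, the operator $D^2$ acts as the scalar $\sum_\mu D_\mu(P)^2$ on the spinor factor, so the spinor trace only contributes a constant $2^{[n/2]}$ and the remaining task is to compute a trace on $\mathcal{H}_r$. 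Applying Proposition \ref{prop:kernel} with $g = h$, the trace equals the diagonal integral of the kernel,
\[
\int K(x,x)\, d^nx = \int (Uf)(x_0, e^{\lambda p_0}\vec x)\, h(p_0, e^{-\lambda p_0}\vec p)\, \frac{d^n p\, d^n x}{(2\pi)^n}.
\]

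Next I would introduce the change of variables $\vec y = e^{\lambda p_0}\vec x$ and $\vec q = e^{-\lambda p_0}\vec p$. The two Jacobians $e^{-(n-1)\lambda p_0}$ and $e^{(n-1)\lambda p_0}$ cancel exactly, and the integrand factorizes into a function of $(x_0,\vec y)$ alone times a function of $(p_0,\vec q)$ alone. The integral therefore decouples as
\[
\Phi_t\bigl(f(D^2+\mu^2)^{-s/2}\bigr) = \frac{2^{[n/2]}}{(2\pi)^n}\left(\int (Uf)(x)\, d^nx\right) I(s,t),
\]
with
\[
I(s,t) = \int_{\mathbb R}\int_{\mathbb R^{n-1}} e^{-t\lambda p_0}\bigl(\lambda^{-2}(1-e^{-\lambda p_0})^2 + |\vec q|^2 + \mu^2\bigr)^{-s/2}\, d^{n-1}q\, dp_0.
\]
For a suitable positive $f \in \mathcal A$ the prefactor is nonzero, so finiteness of $\Phi_t$ is equivalent to finiteness of $I(s,t)$.

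Finally I would analyze $I(s,t)$. A standard spherical computation shows that the $\vec q$-integral converges iff $s>n-1$, producing a positive constant times $A(p_0)^{(n-1-s)/2}$, where $A(p_0):=\lambda^{-2}(1-e^{-\lambda p_0})^2+\mu^2$. It remains to study the behaviour of $\int_{\mathbb R} e^{-t\lambda p_0} A(p_0)^{(n-1-s)/2}\, dp_0$ at both ends. As $p_0 \to +\infty$ one has $A(p_0) \to \lambda^{-2}+\mu^2$, so the integrand behaves like $e^{-t\lambda p_0}$, giving convergence iff $t>0$. As $p_0 \to -\infty$ one has $A(p_0) \sim \lambda^{-2}e^{-2\lambda p_0}$, so the integrand behaves like $e^{(s-n+1-t)\lambda p_0}$, giving convergence iff $s>n-1+t$ (which already implies $s>n-1$). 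Taking the infimum over admissible $s$ yields $p=n-1+t$. The main obstacle is purely bookkeeping: tracking the exponential factors through the two changes of variable so that the trace factorizes cleanly and the two convergence conditions are exposed without interference.
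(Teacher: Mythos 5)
Your proposal is correct and follows essentially the same route as the paper: reduce to the kernel/symbol of $f\Delta_{\Phi}^{t}(D^{2}+\mu^{2})^{-s/2}$ via Proposition \ref{prop:kernel}, decouple the $x$ and momentum integrals by the change of variables whose Jacobians $e^{\mp(n-1)\lambda p_{0}}$ cancel, perform the $(n-1)$-dimensional momentum integral (requiring $s>n-1$), and read off $t>0$ and $s>n-1+t$ from the asymptotics of the remaining $p_{0}$-integral at $\pm\infty$. The one point where the paper is more careful is the trace-class justification: instead of asserting $\mathrm{Tr}=\int K(x,x)\,d^{n}x$ outright (which does not follow from integrability of the diagonal kernel alone), it invokes the criterion that integrability of the symbol together with finitely many of its derivatives implies the operator is trace-class, which is what legitimizes the diagonal-integral computation you use.
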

\begin{proof}
First of all notice that we have $\Delta_{\Phi}^{t}f=\sigma^{t}(f)\Delta_{\Phi}^{t}$,
so that we can consider without loss of generality the operator $A:=f\Delta_{\Phi}^{t}(D^{2}+\mu^{2})^{-s/2}$.
Using the unitary operator $U$ we can consider $A$ as an operator
on $L^{2}(\mathbb{R}^{n}\otimes\mathbb{C}^{2^{[n/2]}})$ whose symbol,
thanks to Proposition \ref{prop:kernel}, is given by $a(x,\xi):=(Uf)(x_{0},e^{\lambda\xi_{0}}\vec{x})G_{s,t}^{\Delta}(\xi_{0},e^{-\lambda\xi_{0}}\vec{\xi})$,
where we have defined
\[
G_{s,t}^{\Delta}(\xi)=e^{-t\lambda\xi_{0}}\left(\lambda^{-2}\left(1-e^{-\lambda\xi_{0}}\right)^{2}+\vec{\xi}^{2}+\mu^{2}\right)^{-s/2}\ .
\]
To prove that the operator $A$ is trace-class it suffices to show
that its symbol and certain number of its derivatives are integrable
\cite{arsu}. We now show that the symbol $a(x,\xi)$ is integrable.
With a simple change of variables we can factorize the integral as
\[
\int|a(x,\xi)|d^{n}xd^{n}\xi=\int G_{s,t}^{\Delta}(\xi)d^{n}\xi\int|(Uf)(x)|d^{n}x\ .
\]
The integral of $(Uf)(x)$ is clearly finite. We can now perform the
integral in the variables $(\xi_{1},\cdots,\xi_{n-1})$ using the
well known formula 
\[
\int(\xi^{2}+a^{2})^{-z/2}d^{N}\xi=\pi^{N/2}\frac{\Gamma\left(\frac{z-N}{2}\right)}{\Gamma\left(\frac{z}{2}\right)}a^{-(z-N)}\ ,
\]
which is valid for $\mathrm{Re}(z)>N$. Then we have
\[
\int G_{s,t}^{\Delta}(\xi)d^{n}\xi=\pi^{(n-1)/2}\frac{\Gamma\left(\frac{s-(n-1)}{2}\right)}{\Gamma\left(\frac{s}{2}\right)}\int e^{-t\lambda\xi_{0}}\left(\lambda^{-2}\left(1-e^{-\lambda\xi_{0}}\right)^{2}+\mu^{2}\right)^{-\frac{s-(n-1)}{2}}d\xi_{0}\ ,
\]
provided that $s>n-1$. To proceed further we consider the asymptotics
of the integrand
\[
\tilde{I}_{t}(s):=e^{-t\lambda\xi_{0}}\left(\lambda^{-2}\left(1-e^{-\lambda\xi_{0}}\right)^{2}+\mu^{2}\right)^{-\frac{s-(n-1)}{2}}\ .
\]
 For $\xi_{0}\to+\infty$ we have $\tilde{I}_{t}\sim e^{-t\lambda|\xi_{0}|}$,
so it integrable provided that $t>0$, independently of $s$. In the
other regime $\xi_{0}\to-\infty$ we have instead
\[
\tilde{I}_{t}\sim e^{t\lambda|\xi_{0}|}e^{-(s-(n-1))\lambda|\xi_{0}|}=e^{-(s-(n-1)-t)\lambda|\xi_{0}|}\ ,
\]
which is integrable when $s>n-1+t$. It is easy to see that the various
derivatives of the symbol $a(x,\xi)$ are integrable under these conditions.
Finally taking the infimum over $s$ we obtain that the spectral dimension
is $p=n-1+t$.
\end{proof}
We see that by introducing the weight $\Phi_{t}$ we are able to obtain
a finite spectral dimension. But what about the free parameter $t$?
A natural choice is to fix $t=1$, in such a way that the spectral
dimension coincides with the classical dimension $n$. To understand
this ambiguity let us consider for a moment a generic situation, where
we have a weight $\Phi(\cdot)=\mathrm{Tr}(\Delta_{\Phi}\cdot)$ and
a spectral dimension $p$. If we define the non-commutative integral
$\varphi$ in terms of $\Phi$, as we will do later in terms of a
zeta function, then on general grounds we expect to have the property
\[
\varphi(ab)=\varphi\left(\sigma_{i}^{\Phi}\left(\sigma^{p}(b)\right)a\right)\ .
\]
Here $\sigma$ is the twist coming from the twisted commutator and
$\sigma^{\Phi}$ is the modular group of $\Phi$. This is simply what
happens in the twisted setting, except for the presence of the additional
twist $\sigma^{\Phi}$. Let us consider now our specific case: the
twist of the commutator is $\sigma(a)=e^{-\lambda P_{0}}ae^{\lambda P_{0}}$,
the twist of the weight is $\sigma_{i}^{\Phi_{t}}(a)=e^{t\lambda P_{0}}be^{-t\lambda P_{0}}$
and for $t>0$ the spectral dimension is given by $p=n-1+t$. Using
these formulae we can compute
\[
\begin{split}\sigma_{i}^{\Phi_{t}}\left(\sigma^{p}(b)\right)a & =e^{t\lambda P_{0}}e^{-(n-1+t)\lambda P_{0}}be^{(n-1+t)\lambda P_{0}}e^{-t\lambda P_{0}}a\\
 & =e^{-(n-1)\lambda P_{0}}be^{(n-1)\lambda P_{0}}a=\sigma_{i}^{\omega}(b)a\ .
\end{split}
\]
We learn, from this short computation, that this specific combination
allows us to recover the KMS condition for the weight $\omega$. This
provides strength to the argument that recovering the weight $\omega$
from the non-commutative integral provides the right guidance in this
setting. At the same time it shows that this is not enough to fix
the free parameter $t$, since it disappears in the combination $\sigma^{\Phi_{t}}\circ\sigma^{p}$.
We do not know at the moment what kind of condition could select the
value $t=1$ uniquely (apart from recovering the classical dimension,
of course).

\subsection{Poles of the zeta function}

In the following we fix $t=1$. Then the function $\Phi\left(f(D^{2}+\mu^{2})^{-s/2}\right)$
has a singularity at $s=n$, whose nature we now want to investigate,
along with its analytic continuation to the complex plane. The singularities
of this kind of {}``zeta function'' play an important role in the
local index formula of Connes and Moscovici \cite{local-index}. Before
starting the analysis, let us briefly review the commutative case
of $\mathbb{R}^{n}$. The Dirac operator is given by $D=-i\Gamma^{\mu}\partial_{\mu}$,
where $\Gamma^{\mu}$ are the gamma matrices satisfying the relations
$\{\Gamma^{\mu},\Gamma^{\nu}\}=2\delta^{\mu\nu}$ and the dimension
of the spinor bundle is $2^{[n/2]}$. We consider the zeta function
defined by
\[
\zeta_{f}(z)=\mathrm{Tr}\left(f(D^{2}+\mu^{2})^{-z/2}\right)\ .
\]
Here $\mu$ is, as usual, a non-zero real number needed to compensate
for the lack of invertibility of $D$. An immediate computation shows
that
\[
\zeta_{f}(z)=\frac{2^{[n/2]}}{(2\pi)^{n}}\int\left(\xi^{2}+\mu^{2}\right)^{-z/2}d^{n}\xi\int f(x)d^{n}x\ ,
\]
where the coefficient $2^{[n/2]}$ comes from the trace over the spinor
bundle. The integral over $\xi$ is finite for $\mathrm{Re}(z)>n$
and we get
\begin{equation}
I_{c}(z):=\int\left(\xi^{2}+\mu^{2}\right)^{-z/2}d^{n}\xi=\pi^{n/2}\mu^{n-z}\frac{\Gamma\left(\frac{z-n}{2}\right)}{\Gamma\left(\frac{z}{2}\right)}\ .\label{eq:comm-int}
\end{equation}
We obtain an analytic continuation using well-known properties of
the gamma function, and we find that the only singularities of $\xi_{f}(z)$
are simple poles. Indeed $\Gamma(z)$ has poles on the negative real
axis at $z=0,-1,-2,\cdots$, so that the function $\Gamma\left(\frac{z-n}{2}\right)$
has poles at $z=n-2m$, where $m\in\mathbb{N}_{0}$. When $n$ is
even the poles at $z=0,-2,-4,\cdots$ are canceled by the zeroes of
$\Gamma\left(\frac{z}{2}\right)$. Then the result is that $\zeta_{f}(z)$
has simple poles at $z=n,n-2,\cdots,2$ when $n$ is even, and has
simple poles at $z=n,n-2,\cdots,1,-1,-3,\cdots$ when $n$ is odd.

For compact Riemannian manifolds this kind of zeta function has been
studied by Minakshisundaram and Pleijel \cite{zeta-compact}, and
here we have the analogous result for $\mathbb{R}^{n}$. We can easily
compute the residue at $z=n$ of $\zeta_{f}(z)$, which is given by
\[
\mathrm{Res}_{z=n}\zeta_{f}(z)=\frac{2^{[n/2]}}{(2\pi)^{n}}\frac{2\pi^{n/2}}{\Gamma\left(\frac{n}{2}\right)}\int f(x)d^{n}x\ .
\]

Now we are ready to study the singularities and the analytic continuation
in the case of $\kappa$-Minkowski space, where the relevant zeta
function is defined by
\[
\zeta_{f}(z):=\Phi\left(f(D^{2}+\mu^{2})^{-z/2}\right)\ ,
\]
where we recall that $\Phi(\cdot)=\mathrm{Tr}(\Delta_{\Phi}\cdot)$
and we omit the representation symbol $\pi$.
\begin{prop}
Let $f\in\mathcal{A}$ and $\mathrm{Re}(z)>n$. Then we have
\[
\zeta_{f}(z)=\frac{2^{[n/2]}}{(2\pi)^{n}}I(z)\int f(x)d^{n}x\ ,
\]
where $I(z)=\frac{1}{2}(I_{c}(z)+I_{\lambda}(z))$, with the function
$I_{c}(z)$ being the classical result given in (\ref{eq:comm-int}),
which is independent of $\lambda$, and the function $I_{\lambda}(z)$
being given by 
\[
I_{\lambda}(z)=\pi^{(n-1)/2}\mu^{(n-1)-z}\frac{\Gamma\left(\frac{z-(n-1)}{2}\right)}{\Gamma\left(\frac{z}{2}\right)}\lambda^{-1}{}_{2}F_{1}\left(\frac{1}{2},\frac{z-(n-1)}{2};\frac{3}{2};-\frac{1}{(\lambda\mu)^{2}}\right)\ .
\]
The function $I(z)$ reduces to the classical one $I_{c}(z)$ in the
limit $\lambda\to0$.\end{prop}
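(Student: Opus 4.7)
The plan is to reduce the trace $\zeta_{f}(z)=\mathrm{Tr}(\Delta_{\Phi}\,f(D^{2}+\mu^{2})^{-z/2})$ to an integral of a pseudodifferential symbol, factorise it into an $x$-integral that reproduces $\int f(x)\,d^{n}x$ and a momentum integral over $\xi\in\mathbb{R}^{n}$, and then perform a change of variables that splits the momentum integral into a classical part and a genuinely $\lambda$-dependent part.

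Using $\Delta_{\Phi}f=\sigma(f)\Delta_{\Phi}$ I would rewrite $A=f\Delta_{\Phi}(D^{2}+\mu^{2})^{-z/2}$, transport it to $L^{2}(\mathbb{R}^{n})\otimes\mathbb{C}^{2^{[n/2]}}$ via the unitary $U$ of Proposition \ref{prop:kernel}, and read off its symbol $a(x,\xi)=(Uf)(x_{0},e^{\lambda\xi_{0}}\vec{x})\,G^{\Delta}_{z,1}(\xi_{0},e^{-\lambda\xi_{0}}\vec{\xi})$ exactly as in the proof of Proposition \ref{prop:spec-dim}. Taking the trace gives $\frac{2^{[n/2]}}{(2\pi)^{n}}\int a(x,\xi)\,d^{n}x\,d^{n}\xi$; the rescalings $\vec{y}=e^{\lambda\xi_{0}}\vec{x}$ and $\vec{\eta}=e^{-\lambda\xi_{0}}\vec{\xi}$ have reciprocal Jacobians and factorise this as $\int(Uf)(x)\,d^{n}x\cdot\int G^{\Delta}_{z,1}(\xi)\,d^{n}\xi$. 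A quick Fourier computation, using that the $x_{0}$-integral produces $2\pi\delta(p_{0})$, collapses the first factor to $\int f(x)\,d^{n}x$.

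To handle the $\xi$-integral I would first do the radial integral over $\vec{\xi}\in\mathbb{R}^{n-1}$ by the same gamma-function formula invoked in Proposition \ref{prop:spec-dim}, yielding $\pi^{(n-1)/2}\,\Gamma(\tfrac{z-(n-1)}{2})/\Gamma(\tfrac{z}{2})$ times a one-dimensional integral of $e^{-\lambda\xi_{0}}(\lambda^{-2}(1-e^{-\lambda\xi_{0}})^{2}+\mu^{2})^{-\alpha}$ in $\xi_{0}$, with $\alpha=(z-(n-1))/2$. The decisive move is the substitution $w=\lambda^{-1}(1-e^{-\lambda\xi_{0}})$, for which $dw=e^{-\lambda\xi_{0}}d\xi_{0}$ and which maps $\xi_{0}\in\mathbb{R}$ bijectively to $w\in(-\infty,1/\lambda)$. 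I would then split $\int_{-\infty}^{1/\lambda}(w^{2}+\mu^{2})^{-\alpha}dw$ at $w=0$: the part over $(-\infty,0)$ equals $\int_{0}^{\infty}(w^{2}+\mu^{2})^{-\alpha}dw$ by symmetry and, after simplifying with $\alpha-\tfrac{1}{2}=(z-n)/2$, reassembles into the classical $I_{c}(z)$; the part over $(0,1/\lambda)$ is handled by the identity $\int_{0}^{\tau}(1+y^{2})^{-\alpha}dy=\tau\,{}_{2}F_{1}(\tfrac{1}{2},\alpha;\tfrac{3}{2};-\tau^{2})$ applied with $\tau=1/(\lambda\mu)$ after the rescaling $y=w/\mu$, yielding precisely the hypergeometric expression defining $I_{\lambda}(z)$.

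The classical limit is then transparent from the substitution itself: as $\lambda\to 0$ the upper endpoint $1/\lambda$ runs to $+\infty$, so $(0,1/\lambda)$ merges with $(-\infty,0)$ and the whole integral reverts to the one-dimensional classical one, giving $I(z)\to I_{c}(z)$. Equivalently, I would invoke the standard connection formula for ${}_{2}F_{1}(\tfrac{1}{2},\alpha;\tfrac{3}{2};z)$ at $z\to -\infty$: its leading $(-z)^{-1/2}$ term combines with the prefactor $\lambda^{-1}$ to produce exactly the gamma-function combination that completes $I_{c}(z)$, while the subleading $(-z)^{-\alpha}$ term vanishes since $\mathrm{Re}(\alpha)>\tfrac{1}{2}$ throughout the half-plane $\mathrm{Re}(z)>n-1$ where the expressions are defined. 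There is no conceptual obstacle; the main pitfall is the careful bookkeeping of factors of $2$, $\pi$ and gamma functions across the two successive radial integrations and the hypergeometric identity, which is the step most vulnerable to numerical slips.
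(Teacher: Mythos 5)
Your proposal is correct and follows essentially the same route as the paper: transport to the symbol via Proposition \ref{prop:kernel}, factor off $\int Uf=\int f$, do the radial $\vec{\xi}$-integral with the gamma-function formula, and evaluate the remaining one-dimensional integral in closed form. The only real differences are that where the paper substitutes $r=e^{-\lambda\xi_{0}}$ and quotes the tabulated identity $\int_{0}^{\infty}\left((1-r)^{2}+a^{2}\right)^{-z}dr=a^{-2z}\bigl[\tfrac{a\sqrt{\pi}}{2}\Gamma(z-\tfrac{1}{2})/\Gamma(z)+{}_{2}F_{1}(\tfrac{1}{2},z;\tfrac{3}{2};-a^{-2})\bigr]$, you re-derive that identity by substituting $w=\lambda^{-1}(1-e^{-\lambda\xi_{0}})$ and splitting at $w=0$; and your classical-limit argument (the endpoint $1/\lambda\to\infty$, with the missing tail vanishing because $\mathrm{Re}(z)>n$) is cleaner than the paper's use of the large-argument expansion of ${}_{2}F_{1}$. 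Two bookkeeping points. First, your split-at-zero computation actually yields $I(z)=\tfrac{1}{2}I_{c}(z)+I_{\lambda}(z)$ with $I_{\lambda}$ normalized as in the statement; this coincides with the paper's proof, which defines $I_{\lambda}$ with an extra factor of $2$ so as to write $I=\tfrac{1}{2}(I_{c}+I_{\lambda})$ --- i.e.\ the statement's displayed $I_{\lambda}$ and the proof's differ by a factor $2$, and your computation sides with the proof, so do not force the final answer into the literal form $\tfrac{1}{2}(I_{c}+I_{\lambda})$ with the statement's normalization (the residue corollary and the classical limit are unaffected either way). Second, in your alternative limit argument the subleading term of the hypergeometric expansion contributes $O(\lambda^{2\alpha-1})$ with $\alpha=\tfrac{z-(n-1)}{2}$, so it vanishes when $\mathrm{Re}(\alpha)>\tfrac{1}{2}$, i.e.\ $\mathrm{Re}(z)>n$, not $\mathrm{Re}(z)>n-1$ as you wrote; this is harmless here since the proposition assumes $\mathrm{Re}(z)>n$.
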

\begin{proof}
From the proof of Proposition \ref{prop:spec-dim} we have
\[
\mathrm{Tr}\left(f\Delta_{\Phi}(D^{2}+\mu^{2})^{-z/2}\right)=\frac{2^{[n/2]}}{(2\pi)^{n}}\int G_{s}^{\Delta}(\xi)d^{n}\xi\int(Uf)(x)d^{n}x\ ,
\]
where we recall that we have set $t=1$. Similarly to the classical
case we set $I(z):=\int G_{s}^{\Delta}(\xi)d^{n}\xi$. We already
partially computed this integral, and the result was
\[
I(z)=\pi^{(n-1)/2}\frac{\Gamma\left(\frac{z-(n-1)}{2}\right)}{\Gamma\left(\frac{z}{2}\right)}\int e^{-t\lambda\xi_{0}}\left(\lambda^{-2}\left(1-e^{-\lambda\xi_{0}}\right)^{2}+\mu^{2}\right)^{-\frac{z-(n-1)}{2}}d\xi_{0}\ .
\]
We need to compute the last integral. First we do the change of variable
$r=e^{-\lambda\xi_{0}}$ and obtain
\[
I(z)=\pi^{(n-1)/2}\frac{\Gamma\left(\frac{z-(n-1)}{2}\right)}{\Gamma\left(\frac{z}{2}\right)}\lambda^{z-n}\int_{0}^{\infty}\left((1-r)^{2}+(\lambda\mu)^{2}\right)^{-\frac{z-(n-1)}{2}}dr\ .
\]
This integral can be computed analytically. We use the formula
\[
\int_{0}^{\infty}\left((1-r)^{2}+a^{2}\right)^{-z}dr=a^{-2z}\left[\frac{a\sqrt{\pi}}{2}\frac{\Gamma\left(z-\frac{1}{2}\right)}{\Gamma(z)}+{}_{2}F_{1}\left(\frac{1}{2},z;\frac{3}{2};-\frac{1}{a^{2}}\right)\right]\ ,
\]
which is valid for $\mathrm{Re}(z)>1/2$. Here $_{2}F_{1}(a,b;c;z)$
is the ordinary hypergeometric function. Therefore the integral in
$I(z)$ is finite for $\mathrm{Re}(z)>n$ and we have
\[
I(z)=\frac{1}{2}\pi^{n/2}\mu^{n-z}\frac{\Gamma\left(\frac{z-n}{2}\right)}{\Gamma\left(\frac{z}{2}\right)}+\frac{1}{2}I_{\lambda}(z)\ ,
\]
where we have defined the function
\[
I_{\lambda}(z):=2\pi^{(n-1)/2}\mu^{(n-1)-z}\frac{\Gamma\left(\frac{z-(n-1)}{2}\right)}{\Gamma\left(\frac{z}{2}\right)}\lambda^{-1}{}_{2}F_{1}\left(\frac{1}{2},\frac{z-(n-1)}{2};\frac{3}{2};-\frac{1}{(\lambda\mu)^{2}}\right)\ .
\]
Notice that we have $I(z)=\frac{1}{2}(I_{c}(z)+I_{\lambda}(z))$.
Finally we have $\int Uf=\int f$, which is valid for $f\in\mathcal{A}$,
from which the first part of the proposition follows.

Now we want to consider the classical limit of $I(z)$, in the case
$\mathrm{Re}(z)>n$. Using the linear transformation formulae for
the hypergeometric function $_{2}F_{1}(a,b;c;z)$ it is easy to obtain
an asymptotic expansion for large negative $z$, see \cite{abra-steg}.
This expansion takes the form
\[
_{2}F_{1}(a,b;c;-z)\sim\frac{\Gamma(c)\Gamma(b-a)}{\Gamma(b)\Gamma(c-a)}z^{-a}+\frac{\Gamma(c)\Gamma(a-b)}{\Gamma(a)\Gamma(c-b)}z^{-b}\ .
\]
With this result it is easy to compute the limit
\[
\lim_{\lambda\to0}\lambda^{-1}{}_{2}F_{1}\left(\frac{1}{2},\frac{z-(n-1)}{2};\frac{3}{2};-\frac{1}{(\lambda\mu)^{2}}\right)=\frac{\sqrt{\pi}}{2}\frac{\Gamma\left(\frac{z-n}{2}\right)}{\Gamma\left(\frac{z-(n-1)}{2}\right)}\mu\ .
\]
Then we see that $I(z)$ reduces to $I_{c}(z)$ in the classical limit
$\lambda\to0$.\end{proof}
\begin{cor}
For $f\in\mathcal{A}$ we have
\[
\mathrm{Res}_{z=n}\zeta_{f}(z)=c_{n}\omega(f)\ ,
\]
where the constant is defined as $c_{n}=\frac{2^{[n/2]}}{(2\pi)^{n}}\frac{\pi^{n/2}}{\Gamma\left(\frac{n}{2}\right)}$.\end{cor}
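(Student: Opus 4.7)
The plan is to read the residue at $z=n$ directly off the explicit formula for $\zeta_f(z)$ established in the preceding proposition, namely
\[
\zeta_f(z)=\frac{2^{[n/2]}}{(2\pi)^{n}}\,I(z)\,\omega(f),\qquad I(z)=\tfrac{1}{2}\bigl(I_c(z)+I_\lambda(z)\bigr).
\]
Since $\omega(f)=\int f(x)\,d^n x$ is a $z$-independent scalar factor, the task reduces to computing $\mathrm{Res}_{z=n}I(z)$, and to do this I would analyze the two summands separately.

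First I would handle the classical piece $I_c(z)=\pi^{n/2}\mu^{n-z}\Gamma\!\bigl(\tfrac{z-n}{2}\bigr)/\Gamma(z/2)$. The only source of a pole at $z=n$ is the factor $\Gamma\!\bigl(\tfrac{z-n}{2}\bigr)$; expanding $\Gamma(w)\sim 1/w$ near $w=0$ gives $\Gamma\!\bigl(\tfrac{z-n}{2}\bigr)\sim 2/(z-n)$, so
\[
\mathrm{Res}_{z=n}I_c(z)=\frac{2\pi^{n/2}}{\Gamma(n/2)}.
\]

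Next I would show that the deformation-dependent piece $I_\lambda(z)$ is regular at $z=n$. Setting $z=n$ in its explicit expression, the gamma factor becomes $\Gamma((z-(n-1))/2)=\Gamma(1/2)=\sqrt\pi$, which is finite; the remaining factors $\mu^{(n-1)-z}$ and the hypergeometric ${}_2F_1\bigl(\tfrac12,\tfrac{z-(n-1)}{2};\tfrac32;-1/(\lambda\mu)^2\bigr)$ are manifestly analytic in $z$ in a neighborhood of $z=n$ (the hypergeometric is entire in its second parameter for fixed arguments of modulus less than $1$—but since our argument is outside the unit disk one should verify this via the analytic continuation formula already invoked in the proof of the proposition, which provides an asymptotic expansion valid globally in $\lambda$). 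Hence $I_\lambda$ contributes nothing to the residue.

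Combining these two observations,
\[
\mathrm{Res}_{z=n}I(z)=\tfrac{1}{2}\cdot\frac{2\pi^{n/2}}{\Gamma(n/2)}=\frac{\pi^{n/2}}{\Gamma(n/2)},
\]
so multiplying by the prefactor $\tfrac{2^{[n/2]}}{(2\pi)^n}$ and by $\omega(f)$ yields the claim with the stated constant $c_n$. There is no substantive obstacle here; the only point requiring a moment of care is the regularity of the ${}_2F_1$ factor at $z=n$ for nonzero $\lambda$, which can be verified by inspecting the linear-transformation formula used earlier (the same one that produced the classical limit).
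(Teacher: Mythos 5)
Your proposal is correct and follows essentially the same route as the paper: the paper likewise notes that $I_{\lambda}(z)$ is regular at $z=n$, so that only one-half of the classical residue $2\pi^{n/2}/\Gamma\left(\frac{n}{2}\right)$ of $I_{c}(z)$ survives, giving $\mathrm{Res}_{z=n}I(z)=\pi^{n/2}/\Gamma\left(\frac{n}{2}\right)$ and hence the stated constant $c_{n}$. Your extra care about the analyticity of the hypergeometric factor at $z=n$ is a sound (if implicit in the paper) verification, not a departure in method.
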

\begin{proof}
The function $I_{\lambda}(z)$ is regular at $z=n$, so we get one-half
of the classical residue, that is we have $\mathrm{Res}_{z=n}I(z)=\pi^{n/2}/\Gamma\left(\frac{n}{2}\right)$.
The result follows immediately.\end{proof}
\begin{prop}
Let $f\in\mathcal{A}$. Then the zeta function
\[
\zeta_{f}(z)=\frac{2^{[n/2]}}{(2\pi)^{n}}I(z)\int f(x)d^{n}x
\]
has a meromorphic extension to the whole complex plane with only simple
poles.\end{prop}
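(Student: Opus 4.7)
My strategy is to analyze separately the two summands in $I(z)=\tfrac{1}{2}(I_c(z)+I_\lambda(z))$ supplied by the previous proposition; since $\zeta_f(z)$ differs from $I(z)$ only by the scalar $\tfrac{2^{[n/2]}}{(2\pi)^n}\int f(x)d^nx$, everything reduces to establishing a meromorphic continuation of $I(z)$ with only simple poles. The classical part $I_c(z)=\pi^{n/2}\mu^{n-z}\Gamma(\tfrac{z-n}{2})/\Gamma(z/2)$ is already handled by the $\mathbb{R}^n$ computation recalled in the excerpt: its singularities are the simple poles of $\Gamma(\tfrac{z-n}{2})$ at $z=n-2k$, partially cancelled by zeros of $1/\Gamma(z/2)$ when $n$ is even.

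The core of the proof concerns the deformed piece. Rewriting it as
\[
I_\lambda(z)=2\pi^{(n-1)/2}\lambda^{-1}\mu^{(n-1)-z}\frac{\Gamma(\tfrac{z-(n-1)}{2})}{\Gamma(z/2)}F(z),\qquad F(z)={}_2F_1\!\left(\tfrac{1}{2},\tfrac{z-(n-1)}{2};\tfrac{3}{2};-\tfrac{1}{(\lambda\mu)^2}\right),
\]
the key point is that $F(z)$ is \emph{entire} in $z$. The argument $w=-(\lambda\mu)^{-2}$ lies on the negative real axis, hence away from the branch cut $[1,\infty)$, while $c=\tfrac{3}{2}$ is fixed. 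I would invoke the classical fact that the regularized hypergeometric $\mathbf{F}(a,b;c;w):={}_2F_1(a,b;c;w)/\Gamma(c)$ is entire in each of its parameters for $w\in\mathbb{C}\setminus[1,\infty)$; this gives entirety in $b=\tfrac{z-(n-1)}{2}$, and therefore in $z$. A self-contained derivation starts from the power series for $|w|<1$, whose coefficients are polynomial in $b$, and propagates the analyticity to our $w$ via the Pochhammer-contour form of Euler's integral representation, or by iterating contiguity relations.

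With $F(z)$ entire, all poles of $I_\lambda(z)$ arise from the Gamma ratio $\Gamma(\tfrac{z-(n-1)}{2})/\Gamma(z/2)$: simple poles at $z=n-1-2k$ for $k\in\mathbb{N}_0$, with the subfamily at non-positive even integers cancelled by zeros of $1/\Gamma(z/2)$ when $n$ is odd. Every such pole is simple, and the pole sets of $I_c$ and $I_\lambda$ consist of integers of opposite parity and are thus disjoint; even were they not, the sum of two simple poles is at most a simple pole. Therefore $I(z)$, and hence $\zeta_f(z)$, extends meromorphically to $\mathbb{C}$ with only simple poles, interleaving an additional family of simple poles coming from $I_\lambda$ with the classical ones coming from $I_c$. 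The single step that is not routine Gamma bookkeeping is the entirety of $F(z)$, which I expect to be more a matter of choosing the cleanest formulation of a classical result than a genuine obstacle.
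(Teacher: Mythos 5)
Your proposal is correct and follows essentially the same route as the paper: decompose $I(z)=\tfrac{1}{2}\bigl(I_{c}(z)+I_{\lambda}(z)\bigr)$, note that the hypergeometric factor contributes no poles in $z$, and read off only simple poles from the ratios of gamma functions. The one difference is that you justify analyticity of ${}_{2}F_{1}$ in the parameter $\tfrac{z-(n-1)}{2}$ explicitly (via the Euler integral/regularized function, with the argument $-1/(\lambda\mu)^{2}$ off the cut $[1,\infty)$), whereas the paper only invokes analytic continuation in the argument; this makes the key step slightly more careful but does not change the argument.
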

\begin{proof}
Since $I(z)=\frac{1}{2}(I_{c}(z)+I_{\lambda}(z))$, where $I_{c}(z)$
is the integral (\ref{eq:comm-int}) arising in the commutative case,
the zeta function $\zeta_{f}(z)$ has the poles of the commutative
case plus additional poles coming from the function $I_{\lambda}(z)$.
To study them consider the hypergeometric function $_{2}F_{1}(a,b;c;z)$,
with the assumption that $c$ does not belong to $\{0,-1,-2,\cdots\}$.
The series defining $_{2}F_{1}(a,b;c;z)$ is convergent in the open
disk $|z|<1$, but can be analytically continued to the entire complex
plane with a branch cut from $z=1$ to $z=\infty$. Therefore the
function
\[
_{2}F_{1}\left(\frac{1}{2},\frac{z-(n-1)}{2};\frac{3}{2};-\frac{1}{(\lambda\mu)^{2}}\right)
\]
does not have any poles in $z$. Now recall that the function $I_{\lambda}(z)$
is defined by
\[
I_{\lambda}(z)=2\pi^{(n-1)/2}\lambda^{-1}\mu^{(n-1)-z}\frac{\Gamma\left(\frac{z-(n-1)}{2}\right)}{\Gamma\left(\frac{z}{2}\right)}{}_{2}F_{1}\left(\frac{1}{2},\frac{z-(n-1)}{2};\frac{3}{2};-\frac{1}{(\lambda\mu)^{2}}\right)\ .
\]
Therefore the only poles of this function come from the ratio of the
two gamma functions. These are simple poles, from which the claim
follows.
\end{proof}
It is interesting to note that the poles of the function $I_{c}(z)$
come from the ratio $\Gamma\left(\frac{z-n}{2}\right)/\Gamma\left(\frac{z}{2}\right)$,
while the poles of the function $I_{\lambda}(z)$ come from the ratio
$\Gamma\left(\frac{z-(n-1)}{2}\right)/\Gamma\left(\frac{z}{2}\right)$:
the latter are therefore the poles of the $(n-1)$-dimensional case.
If we think of the Lorentzian version of $\kappa$-Minkowski space,
we can relate this result to the different properties of the time
direction from the space directions, evident already from the commutation
relations.

\section{Twisted homology}

\subsection{Motivation and preliminaries}

In this section we want to study the homological properties of $\kappa$-Minkowski
space. In the non-compact setting it is not completely clear, at least
as far as we understand, which algebra should be considered in this
respect. A possibility is to consider a certain unitalization of the
algebra $\mathcal{A}$ in consideration, as done in \cite{moyal}.
On the other hand, already at the commutative level, if we consider
$\mathbb{R}^{n}$ we would like to have an analog of the volume form
$dx^{1}\wedge\cdots\wedge dx^{n}$, but is clear that the functions
$x^{\mu}$ do not belong to a unital algebra.

Our plan is to investigate the homological properties of the enveloping
algebra $U(\mathfrak{g}_{\kappa})$, where $\mathfrak{g}_{\kappa}$
is the Lie algebra underlying $\kappa$-Minkowski space. General results
for the twisted homology of an enveloping algebra are given in \cite{dual-comp},
where the twist is called the \textit{Nakayama automorphism}. Here
we choose a more elementary approach, which involves the explicit
computation using the Chevalley-Eilenberg complex for the Lie algebra
$\mathfrak{g}_{\kappa}$. This choice also allows us to do a more
detailed comparison with other examples coming from quantum groups.

Let us start by recalling some notions from homological algebra, following
the exposition given in \cite{basic}. Let $\mathfrak{g}$ be a Lie
algebra and $M$ be a left $\mathfrak{g}$-module. The \textit{Lie
algebra homology} of $\mathfrak{g}$ with coefficients in $M$ is,
by definition, the homology of the \textit{Chevalley-Eilenberg complex}

\[
M\xleftarrow{\delta}M\otimes\Lambda^{1}\mathfrak{g}\xleftarrow{\delta}M\otimes\Lambda^{2}\mathfrak{g}\xleftarrow{\delta}\cdots\ ,
\]
where $\Lambda^{k}\mathfrak{g}$ denotes the $k$-th exterior power
of $\mathfrak{g}$ and the differential $\delta$ is defined by
\begin{equation}
\begin{split}\delta(m\otimes X_{1}\wedge\cdots\wedge X_{n}) & =\sum_{i<j}^{n}(-1)^{i+j}m\otimes[X_{i},X_{j}]\wedge X_{1}\wedge\cdots\wedge\hat{X}_{i}\wedge\cdots\wedge\hat{X}_{j}\wedge\cdots\wedge X_{n}\\
 & +\sum_{i=1}^{n}(-1)^{i}X_{i}(m)\otimes X_{1}\wedge\cdots\wedge\hat{X}_{i}\wedge\cdots\wedge X_{n}\ ,
\end{split}
\label{eq:diff}
\end{equation}
where the hat denotes omission. Denote by $U(\mathfrak{g})$ the universal
enveloping algebra of $\mathfrak{g}$. Given a $U(\mathfrak{g})$-bimodule
$M$, we define the left $\mathfrak{g}$-module $M^{ad}$, where $M^{ad}=M$
as vector spaces and the left module structure is defined for all
$X\in\mathfrak{g}$ and $m\in M$ by 
\[
X(m)=Xm-mX\ .
\]
We can define a map
\[
\varepsilon:M^{ad}\otimes\Lambda^{n}\mathfrak{g}\to M\otimes U(\mathfrak{g})^{\otimes n}
\]
from the Lie algebra complex to the Hochschild complex by
\[
\varepsilon(m\otimes X_{1}\wedge\cdots\wedge X_{n})=\sum_{s\in S_{n}}\mathrm{sgn}(s)m\otimes X_{s(1)}\otimes\cdots\otimes X_{s(n)}\ .
\]
One can prove that $\varepsilon:C(\mathfrak{g},M^{ad})\to C(U(\mathfrak{g}),M)$
is a quasi-isomorphism, so it induces an isomorphism between the corresponding
homology groups
\[
H_{*}(\mathfrak{g},M^{ad})\cong H_{*}(U(\mathfrak{g}),M)\ .
\]
In particular if we choose $M={}_{\sigma}U(\mathfrak{g})$, that is
$U(\mathfrak{g})$ with the bimodule structure $a\cdot b\cdot c=\sigma(a)bc$,
then on the right we have the \textit{twisted Hochschild homology}
$H_{*}(U(\mathfrak{g}),{}_{\sigma}U(\mathfrak{g}))$. The \textit{twisted
Hochschild dimension} is defined, according to \cite{dual-comp},
as the maximum of the homological dimension of $H_{*}(U(\mathfrak{g}),{}_{\sigma}U(\mathfrak{g}))$
over all the automorphisms $\sigma$ of $U(\mathfrak{g})$. The case
$\sigma=\mathrm{id}$ gives the usual Hochschild homology. Interest
in this twisted homology theory comes from the fact that, in several
examples coming from quantum groups, it allows to avoid the phenomenon
of dimension drop. We will see that this is the case also here.

\subsection{The two dimensional case}

Now we can start the computation for the two dimensional case. For
clarity we use the notation $x_{1},x_{2}$ instead of $x_{0},x_{1}$
as done in the previous sections. Since the Lie algebra is two dimensional
the complex is simply given by
\[
M\xleftarrow{\delta}M\otimes\Lambda^{1}g\xleftarrow{\delta}M\otimes\Lambda^{2}g\xleftarrow{\delta}0\ .
\]
The differential $\delta$ acting on $M\otimes\Lambda^{2}g$ takes
the form
\[
\delta(m\otimes X_{1}\wedge X_{2})=-m\otimes[X_{1},X_{2}]-X_{1}(m)\otimes X_{2}+X_{2}(m)\otimes X_{1}\ .
\]
We write $X_{1}$ and $X_{2}$ in the $x_{1},x_{2}$ basis as $X_{1}=c_{1}^{1}x_{1}+c_{1}^{2}x_{2}$
and $X_{2}=c_{2}^{1}x_{1}+c_{2}^{2}x_{2}$, for some coefficients.
Their commutator is given by
\[
[X_{1},X_{2}]=(c_{1}^{1}c_{2}^{2}-c_{2}^{1}c_{1}^{2})i\lambda x_{2}\ .
\]
Notice that for $m\otimes X_{1}\wedge X_{2}$ to be non-trivial we
need $c_{1}^{1}c_{2}^{2}-c_{2}^{1}c_{1}^{2}\neq0$. Indeed we have
\[
m\otimes X_{1}\wedge X_{2}=(c_{1}^{1}c_{2}^{2}-c_{2}^{1}c_{1}^{2})m\otimes x_{1}\wedge x_{2}\ .
\]

\begin{prop}
The twisted homological dimension of $U(\mathfrak{g}_{\kappa})$ is
equal to two.\end{prop}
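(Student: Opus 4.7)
The plan is to establish the claim by matching upper and lower bounds on the twisted homological dimension.

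For the upper bound, note that $\dim \mathfrak{g}_\kappa = 2$ forces $\Lambda^k \mathfrak{g}_\kappa = 0$ for every $k \geq 3$, so the Chevalley--Eilenberg complex is concentrated in degrees $0, 1, 2$. The quasi-isomorphism $\varepsilon: C(\mathfrak{g}, M^{ad}) \to C(U(\mathfrak{g}), M)$ recalled above then shows that $H_k(U(\mathfrak{g}_\kappa), {}_\sigma U(\mathfrak{g}_\kappa))$ vanishes for every $k \geq 3$ and every automorphism $\sigma$, giving homological dimension at most two.

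For the matching lower bound I would exhibit a single twist $\sigma$ for which $H_2 \neq 0$. Because $\Lambda^3 \mathfrak{g}_\kappa = 0$, there are no boundaries in top degree, so $H_2$ is simply the kernel of $\delta$ acting on $M \otimes \Lambda^2 \mathfrak{g}_\kappa$ with $M = {}_\sigma U(\mathfrak{g}_\kappa)^{ad}$; in particular, any nonzero cycle automatically represents a nonzero class. The natural candidate is $1 \otimes x_1 \wedge x_2$. Substituting into the displayed formula for $\delta$ and using that $X(1) = \sigma(X) - X$ in the twisted bimodule, one obtains an element of the form $A \otimes x_2 + B \otimes x_1$ with $A = x_1 - \sigma(x_1) - i\lambda$ and $B = \sigma(x_2) - x_2$. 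Setting $A = B = 0$ reduces the problem to the simple conditions $\sigma(x_2) = x_2$ and $\sigma(x_1) = x_1 - i\lambda$.

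The remaining step, and the one I expect to be the main (mild) obstacle, is verifying that these conditions are realized by an actual automorphism of $U(\mathfrak{g}_\kappa)$. A brief classification, imposing that $\sigma$ preserve $[x_1, x_2] = i\lambda x_2$, shows that all automorphisms have the form $\sigma(x_1) = x_1 + b x_2 + c$ and $\sigma(x_2) = e x_2$ with $e \neq 0$; the choice $b = 0$, $c = -i\lambda$, $e = 1$ is therefore admissible. One recognizes this $\sigma$ as (essentially) the analytic continuation of the modular group $\sigma^\omega$ to imaginary time, which fits neatly with the narrative of the introduction. Nontriviality of the resulting class requires no further work given the absence of $3$-chains, so the only real care is in keeping signs consistent between the differential and the bimodule structure.
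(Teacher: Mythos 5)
Your argument is correct and takes essentially the paper's route: both reduce to showing that a top-degree chain with the twisted adjoint action is a $\delta$-cycle (automatically a nontrivial class since $\Lambda^3\mathfrak{g}_\kappa=0$), arriving at the twist $\sigma(x_1)=x_1-i\lambda$, $\sigma(x_2)=x_2$, the only difference being that the paper keeps a general coefficient $m$ and thereby also exhibits the larger family of admissible twists $\sigma(x_1)=x_1-i(1+b)\lambda$, $b\in\mathbb{N}_0$. One minor inaccuracy: your claimed classification of automorphisms of $U(\mathfrak{g}_\kappa)$ is too restrictive (for instance $x_1\mapsto x_1+p(x_2)$, $x_2\mapsto x_2$ is an automorphism for any polynomial $p$), but it is also unnecessary, since all your proof requires is that your specific $\sigma$ preserves the relation $[x_1,x_2]=i\lambda x_2$, which it visibly does.
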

\begin{proof}
Since $\Lambda^{3}\mathfrak{g}$ is trivial we only have to show that
there exists a non-trivial element $m\otimes X_{1}\wedge X_{2}$ such
that $\delta(m\otimes X_{1}\wedge X_{2})=0$. Computing the differential
we get
\[
\delta(m\otimes X_{1}\wedge X_{2})=-(c_{1}^{1}c_{2}^{2}-c_{2}^{1}c_{1}^{2})\left((i\lambda m+x_{1}(m))\otimes x_{2}-x_{2}(m)\otimes x_{1}\right)\ .
\]
Since $c_{1}^{1}c_{2}^{2}-c_{2}^{1}c_{1}^{2}\neq0$, the condition
$\delta(m\otimes X_{1}\wedge X_{2})=0$ implies
\[
(i\lambda m+x_{1}(m))\otimes x_{2}-x_{2}(m)\otimes x_{1}=0\ .
\]
This in turn implies the conditions $x_{2}(m)=0$ and $i\lambda m+x_{1}(m)=0$.

We have $X(m)=\sigma(X)m-mX$, where $\sigma$ is an automorphism
of the form
\[
\sigma(x_{1})=x_{1}+i\mu\ ,\qquad\sigma(x_{2})=x_{2}\ .
\]
By the Poincaré\textendash{}Birkhoff\textendash{}Witt theorem, we
can write $m\in U(\mathfrak{g}_{\kappa})$ as
\[
m=\sum_{a,b}f_{a,b}x_{1}^{a}x_{2}^{b}\ ,
\]
where the sum is finite, $f_{a,b}$ are numerical coefficients and
the exponents are non-negative integers. Since the automorphism $\sigma$
acts trivially on $x_{2}$, the condition $x_{2}(m)=0$ implies that
$m$ commutes with $x_{2}$, that is $m$ should not depend on $x_{1}$.

The second condition, on the other hand, can be rewritten as
\[
i\lambda m+x_{1}(m)=i(\lambda+\mu)m+[x_{1},m]=0\ .
\]
An easy computation then shows that
\[
[x_{1},m]=\sum_{a,b}f_{0,b}[x_{1},x_{2}^{b}]=i\lambda\sum_{a,b}f_{0,b}bx_{2}^{b}\ .
\]
Plugging this result into $i\lambda m+x_{1}(m)=0$ we obtain
\[
\begin{split}i\lambda m+x_{1}(m) & =i(\lambda+\mu)\sum_{a,b}f_{0,b}x_{2}^{b}+i\lambda\sum_{a,b}f_{0,b}bx_{2}^{b}\\
 & =\sum_{a,b}f_{0,b}i(\lambda(1+b)+\mu)x_{2}^{b}=0\ .
\end{split}
\]
Since $b$ is a non-negative integer, this equation is satisfied if
and only if $\mu=-\lambda(1+b)$, for some $b\in\mathbb{N}_{0}$.
In particular, the simplest choice $b=0$ corresponds to the automorphism
$\sigma(x_{1})=x_{1}-i\lambda$, $\sigma(x_{2})=x_{2}$ which has
been considered in \cite{modulark}.
\end{proof}

\subsection{The $n$-dimensional case}

Let us write $\delta=\delta_{1}+\delta_{2}$, where $\delta_{1}$
and $\delta_{2}$ are given respectively by the first and second line
of equation (\ref{eq:diff}). To study the case of a general dimension
we start by proving two lemmata, which allows to rewrite the differential
in a easier form. The first one is valid for any Lie algebra $\mathfrak{g}$,
and simply requires some gymnastics with differential forms, while
the second one is related to the simple structure of the commutation
relations of the Lie algebra $\mathfrak{g}_{\kappa}$.
\begin{lem}
Let $X_{i}\in\mathfrak{g}$ be given by $X_{i}=c_{i}^{j}x_{j}$, where
$c_{i}^{j}$ are numerical coefficients and $\{x_{j}\}$ is a basis
of the Lie algebra $\mathfrak{g}$. Then we have
\[
\delta_{2}(m\otimes X_{1}\wedge\cdots\wedge X_{n})=\det C\sum_{j=1}^{n}x_{j}(m)\otimes x_{1}\wedge\cdots\wedge\widehat{x}_{j}\wedge\cdots\wedge x_{n}\ ,
\]
where $C$ is the matrix formed by the coefficients $c_{i}^{j}$.\end{lem}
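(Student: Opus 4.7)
The plan is to reduce the identity to the cofactor expansion of $\det C$ by expanding everything in the chosen basis $\{x_j\}$ of $\mathfrak{g}$ and the induced basis $\{\hat x_k := x_1 \wedge \cdots \wedge \widehat{x}_k \wedge \cdots \wedge x_n\}$ of $\Lambda^{n-1}\mathfrak{g}$.

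First I would use linearity of the $\mathfrak{g}$-action on $M$ to write $X_i(m) = c_i^j\, x_j(m)$ (summation over $j$). The key step is then to compute the $(n-1)$-form $X_1 \wedge \cdots \wedge \hat X_i \wedge \cdots \wedge X_n$ in the $\{\hat x_k\}$ basis. By direct multilinear expansion of the wedge product, its coefficient on $\hat x_k$ is (up to sign) the $(n-1)\times(n-1)$ minor $M_{ik}$ of $C$ obtained by deleting row $i$ and column $k$; this can be checked either by a combinatorial sum over permutations or, more slickly, by wedging the candidate expression with $x_k$ and recognising the resulting $n$-form as the determinant of the matrix one obtains by replacing row $i$ of $C$ with the $k$-th coordinate row.

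Substituting the two expansions into $\delta_2 = \sum_{i=1}^n (-1)^i X_i(m) \otimes (X_1 \wedge \cdots \wedge \hat X_i \wedge \cdots \wedge X_n)$ converts the left-hand side into a triple sum over $i,j,k$ of terms proportional to $c_i^j\, M_{ik}\, x_j(m) \otimes \hat x_k$ with appropriate signs. Performing the $i$-summation first, one arrives at the combination $\sum_i (-1)^{i+k} c_i^j M_{ik}$, which is exactly the Laplace expansion along column $k$ of the matrix obtained from $C$ by replacing its $k$-th column by its $j$-th. This equals $\det C$ if $j = k$ (ordinary cofactor expansion) and vanishes otherwise (determinant of a matrix with two equal columns). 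The resulting $\delta_{jk}$ collapses the double sum over $j,k$ into a single sum, producing $\det C$ as an overall factor and the advertised expression.

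I expect the main obstacle to be purely bookkeeping: the signs $(-1)^i$ coming from $\delta_2$, the signs $(-1)^{i+k}$ coming from the cofactor expansion, and any signs arising when passing from $\hat X_i$-wedges to $\hat x_k$-wedges all have to be reconciled. The underlying mechanism is straightforward -- the lemma is essentially the statement that $\delta_2$ transforms equivariantly under a linear change of basis of $\mathfrak{g}$, with multiplier $\det C$ -- but this equivariance becomes a short proof only after the sign tracking is carried out carefully.
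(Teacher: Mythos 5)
Your proposal is correct and follows essentially the same route as the paper's proof: expand $X_i(m)=c_i^j x_j(m)$, expand each $(n-1)$-fold wedge in the $\{x_1\wedge\cdots\wedge\widehat{x}_k\wedge\cdots\wedge x_n\}$ basis via minors of $C$, and recognise the resulting $i$-sum as the Laplace expansion of a determinant that vanishes for $j\neq k$ (repeated column) and equals $\det C$ for $j=k$. The only difference is cosmetic sign bookkeeping, which you correctly flag as the sole point requiring care.
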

\begin{proof}
Denoting by $C_{i,j}$ the $(i,j)$-minor of the matrix $C$ we can
write
\[
X_{1}\wedge\cdots\wedge\widehat{X}_{i}\wedge\cdots\wedge X_{n}=\sum_{j=1}^{n}C_{i,j}x_{1}\wedge\cdots\wedge\widehat{x}_{j}\wedge\cdots\wedge x_{n}\ .
\]
If we expand $X_{i}$ in the basis of the generators we can write
\[
X_{i}(m)=\sum_{k=1}^{n}c_{i}^{k}x_{k}(m)\ .
\]
Then the second line of the differential $\delta$ given by (\ref{eq:diff})
becomes
\[
\delta_{2}=\sum_{j=1}^{n}\sum_{k=1}^{n}x_{k}(m)\otimes\sum_{i=1}^{n}(-1)^{i}c_{i}^{k}C_{i,j}x_{1}\wedge\cdots\wedge\widehat{x}_{j}\wedge\cdots\wedge x_{n}\ .
\]
The sum over $i$ of $(-1)^{i}c_{i}^{k}C_{i,j}$ looks like a Laplace
expansion of the determinant of some matrix. Indeed, it is the determinant
of the matrix obtained from $C$ by replacing the $j$-th column,
given by $c_{a}^{j}$ with $a=1,\cdots,n$, with the column $c_{a}^{k}$.
If $k\neq j$ then, after this replacement, we obviously have two
linearly dependent columns, so the determinant vanishes. On the other
hand if $k=j$ we obtain $\det C$, independent of $j$. So we can
write
\[
\sum_{i=1}^{n}(-1)^{i}c_{i}^{k}C_{i,j}=\det C\delta_{j}^{k}\ .
\]
Plugging this result into the previous formula we find the result.\end{proof}
\begin{lem}
With the same notation as above, consider the Lie algebra $\mathfrak{g}_{\kappa}$
with commutation relations $[x_{1},x_{j}]=i\lambda x_{j}$, where
$j>1$. Then we have
\[
\delta_{1}(m\otimes X_{1}\wedge\cdots\wedge X_{n})=\det Ci\lambda(n-1)m\otimes\widehat{x}_{1}\wedge x_{2}\wedge\cdots\wedge x_{n}\ .
\]
\end{lem}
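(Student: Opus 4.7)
The plan is a direct three-step computation. First, I would use the commutation relations $[x_1,x_j]=i\lambda x_j$ for $j>1$ together with $[x_j,x_k]=0$ for $j,k>1$ to establish the identity
\[
[X_i,X_j]=i\lambda\bigl(c_i^1 X_j-c_j^1 X_i\bigr),
\]
the point being that only brackets involving $x_1$ contribute, while the $x_1$-components of the right-hand side cancel.

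Substituting this into the formula for $\delta_1$ produces two families of terms, one with $X_j$ and one with $X_i$ sitting in the leftmost slot of the wedge. For each I would invoke antisymmetry of the wedge product to move the displaced factor back into its natural position: moving $X_j$ past $X_1,\ldots,\widehat{X}_i,\ldots,X_{j-1}$ costs $j-2$ transpositions, while moving $X_i$ past $X_1,\ldots,X_{i-1}$ costs $i-1$. Combining these signs with $(-1)^{i+j}$ and the relative sign between the two families, both reduce to weighted sums of
\[
\sum_{k}(-1)^{k}c_k^1\, m\otimes X_1\wedge\cdots\widehat{X}_k\cdots\wedge X_n,
\]
with weights $n-k$ and $k-1$ (from counting values of $j>i$ and $i<j$ with $k$ fixed). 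These weights add to $n-1$, independently of $k$.

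Finally, I would identify $\sum_{k}(-1)^{k}c_k^1\, X_1\wedge\cdots\widehat{X}_k\cdots\wedge X_n$ as (a sign times) the interior product $\iota_{x^1}(X_1\wedge\cdots\wedge X_n)$, where $x^1$ denotes the linear functional dual to $x_1$. Then the relations $X_1\wedge\cdots\wedge X_n=\det C\cdot x_1\wedge\cdots\wedge x_n$ and $\iota_{x^1}(x_1\wedge\cdots\wedge x_n)=x_2\wedge\cdots\wedge x_n$ immediately reduce the expression to the asserted form $\det C\cdot i\lambda(n-1)\, m\otimes\widehat{x}_1\wedge x_2\wedge\cdots\wedge x_n$.

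The main obstacle is not conceptual but combinatorial: the signs coming from the definition of the differential, from the commutator identity, and from the two wedge re-orderings must be tracked carefully so that the combined coefficient collapses uniformly to $n-1$ and the final interior-product identification carries the correct overall sign. Once the bookkeeping is done, the uniformity in $k$ is the crucial feature that allows the dependence on the particular basis $\{X_i\}$ to be absorbed into the single scalar $\det C$.
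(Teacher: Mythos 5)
Your argument follows the paper's proof essentially step for step: the same commutator identity $[X_i,X_j]=i\lambda\bigl(c_i^1X_j-c_j^1X_i\bigr)$, the same transposition counts $j-2$ and $i-1$ when restoring the displaced factors, and the same collapse of the weights $n-k$ and $k-1$ to the uniform coefficient $n-1$. The only difference is cosmetic: where you finish by recognizing $\sum_k(-1)^k c_k^1\, X_1\wedge\cdots\wedge\widehat{X}_k\wedge\cdots\wedge X_n$ as (a sign times) the contraction with the functional dual to $x_1$, the paper expands these wedge products in the minors $C_{i,j}$ and reuses the Laplace-expansion argument of the preceding lemma, which is the same multilinear-algebra fact in different clothing.
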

\begin{proof}
We start by computing the commutator of two elements $X_{i}$
\[
[X_{i},X_{j}]=i\lambda\sum_{k=1}^{n}(c_{i}^{1}c_{j}^{k}-c_{i}^{k}c_{j}^{1})x_{k}=i\lambda(c_{i}^{1}X_{j}-c_{j}^{1}X_{i})\ .
\]
Then the first line of the differential $\delta$ given by (\ref{eq:diff})
becomes
\[
\begin{split}\delta_{1} & =\sum_{i<j}i\lambda c_{i}^{1}(-1)^{i+j}m\otimes X_{j}\wedge X_{1}\wedge\cdots\wedge\hat{X}_{i}\wedge\cdots\wedge\hat{X}_{j}\wedge\cdots\wedge X_{n}\\
 & -\sum_{i<j}i\lambda c_{j}^{1}(-1)^{i+j}m\otimes X_{i}\wedge X_{1}\wedge\cdots\wedge\hat{X}_{i}\wedge\cdots\wedge\hat{X}_{j}\wedge\cdots\wedge X_{n}\ .
\end{split}
\]
Now we can bring $X_{i}$ and $X_{j}$ to their missing spots, picking
up some signs. When we move $X_{i}$ we have to go across $i-1$ terms,
so we pick a $(-1)^{i-1}$, while when we move $X_{j}$ we have to
go across $j-2$ terms, since also $X_{i}$ is missing, so we pick
a $(-1)^{j-2}$. Then we have 
\[
\begin{split}\delta_{1} & =\sum_{i<j}i\lambda c_{i}^{0}(-1)^{i}m\otimes X_{1}\wedge\cdots\wedge\hat{X}_{i}\wedge\cdots\wedge X_{n}\\
 & +\sum_{i<j}i\lambda c_{j}^{0}(-1)^{j}m\otimes\wedge X_{1}\wedge\cdots\wedge\hat{X}_{j}\wedge\cdots\wedge X_{n}\ .
\end{split}
\]
Notice that now $j$ and $i$ do not appear anymore respectively in
the first and the second sum. It is not difficult to see that we can
rewrite them as
\[
\begin{split}\delta_{1} & =\sum_{i=1}^{n}i\lambda(n-i)c_{i}^{1}(-1)^{i}m\otimes X_{1}\wedge\cdots\wedge\hat{X}_{i}\wedge\cdots\wedge X_{n}\\
 & +\sum_{j=1}^{n}i\lambda(j-1)c_{j}^{1}(-1)^{j}m\otimes\wedge X_{1}\wedge\cdots\wedge\hat{X}_{j}\wedge\cdots\wedge X_{n}\ .
\end{split}
\]
Summing the two contributions we get
\[
\delta_{1}=i\lambda(n-1)\sum_{i=1}^{n}(-1)^{i}c_{i}^{1}m\otimes\wedge X_{1}\wedge\cdots\wedge\hat{X}_{i}\wedge\cdots\wedge X_{n}\ .
\]
Writing the wedge products in terms of the minors of $C$ we obtain
\[
\delta_{1}=i\lambda(n-1)\sum_{j=1}^{n}\sum_{i=1}^{n}(-1)^{i}c_{i}^{1}C_{i,j}m\otimes\wedge x_{1}\wedge\cdots\wedge\widehat{x}_{j}\wedge\cdots\wedge x_{n}\ .
\]
Finally, using the same arguments of the previous lemma, we obtain
\[
\delta_{1}=\det Ci\lambda(n-1)m\otimes\widehat{x}_{1}\wedge x_{2}\wedge\cdots\wedge x_{n}\ .
\]
This concludes the proof of the lemma.\end{proof}
\begin{thm}
Let $\mathfrak{g}_{\kappa}$ be the Lie algebra associated with $\kappa$-Minkowski
space in $n$-dimensions, which is characterized by the commutation
relations $[x_{1},x_{j}]=i\lambda x_{j}$, where $j>1$. Then the
twisted homological dimension of $U(\mathfrak{g}_{\kappa})$ is equal
to $n$.\end{thm}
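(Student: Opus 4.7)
The upper bound is immediate: since $\mathfrak{g}_\kappa$ is $n$-dimensional, $\Lambda^k\mathfrak{g}_\kappa = 0$ for $k > n$, so the Chevalley--Eilenberg complex truncates and the homological dimension cannot exceed $n$. Via the isomorphism $H_*(\mathfrak{g}_\kappa, M^{ad}) \cong H_*(U(\mathfrak{g}_\kappa), M)$ with $M = {}_\sigma U(\mathfrak{g}_\kappa)$, the whole problem reduces to exhibiting a non-vanishing cycle in top degree for some automorphism $\sigma$; because there is nothing in degree $n+1$, any such cycle represents a non-trivial homology class.

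The plan is to test the element $m \otimes x_1 \wedge x_2 \wedge \cdots \wedge x_n$, which corresponds to $C = \mathrm{Id}$ and hence $\det C = 1$. Applying the two preceding lemmas directly I get
\[
\delta(m \otimes x_1 \wedge \cdots \wedge x_n) = i\lambda(n-1)\, m \otimes \widehat{x}_1 \wedge x_2 \wedge \cdots \wedge x_n + \sum_{j=1}^n x_j(m) \otimes x_1 \wedge \cdots \wedge \widehat{x}_j \wedge \cdots \wedge x_n.
\]
The various wedge products $x_1 \wedge \cdots \wedge \widehat{x}_j \wedge \cdots \wedge x_n$ are linearly independent in $\Lambda^{n-1}\mathfrak{g}_\kappa$, so that the cycle condition splits into $x_j(m) = 0$ for every $j > 1$ and $i\lambda(n-1)\, m + x_1(m) = 0$.

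To solve these I take, in analogy with the two-dimensional case (and as dictated by the requirement that $\sigma$ be a coassociative Hopf-algebra automorphism compatible with $[x_1, x_j] = i\lambda x_j$), an ansatz $\sigma(x_1) = x_1 + i\mu$ and $\sigma(x_j) = x_j$ for $j > 1$. The twisted action then reads $X(m) = \sigma(X) m - m X$. Writing $m = \sum_{\vec a} f_{\vec a}\, x_1^{a_1} x_2^{a_2} \cdots x_n^{a_n}$ via PBW, the conditions $x_j(m) = [x_j, m] = 0$ for $j>1$ force (using $x_j x_1 = (x_1 - i\lambda) x_j$) the vanishing of all monomials with $a_1 > 0$, so $m$ depends only on $x_2, \ldots, x_n$. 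A routine computation of $[x_1, x_2^{a_2} \cdots x_n^{a_n}] = i\lambda(a_2 + \cdots + a_n)\, x_2^{a_2}\cdots x_n^{a_n}$ turns the second condition into
\[
\sum_{\vec a'} f_{\vec a'}\, i\bigl(\lambda(n - 1 + |\vec a'|) + \mu\bigr)\, x_2^{a_2} \cdots x_n^{a_n} = 0,
\]
so $\mu = -\lambda(n-1+k)$ for some non-negative integer $k$. The simplest choice $k=0$ gives $\sigma(x_1) = x_1 - i\lambda(n-1)$ with cycle $1 \otimes x_1 \wedge \cdots \wedge x_n$, and this $\sigma$ is precisely the inverse of the modular automorphism of $\omega$, as promised.

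The heavy lifting has already been done by the two lemmas, which collapse the Chevalley--Eilenberg differential for an arbitrary $n$-frame to a clean expression on the canonical generators. After that, the only remaining work is the PBW bookkeeping above, which is essentially the two-dimensional argument repeated with $(n-1)$ in place of $1$; I expect no real obstacle. The delicate point, if any, is ensuring that the candidate $\sigma$ is actually an algebra automorphism of $U(\mathfrak{g}_\kappa)$, which is immediate since the shift $x_1 \mapsto x_1 + i\mu$ preserves the defining relations $[x_1, x_j] = i\lambda x_j$.
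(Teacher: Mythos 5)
Your proposal is correct and takes essentially the same route as the paper: the paper likewise combines the two lemmas, reduces the cycle condition to $x_j(m)=0$ for $j>1$ together with $i\lambda(n-1)m+x_1(m)=0$, uses the same ansatz $\sigma(x_1)=x_1+i\mu$, $\sigma(x_j)=x_j$, and via PBW obtains $\mu=-\lambda(n-1+a_2+\cdots+a_n)$. The only cosmetic difference is that you specialize to the frame $C=\mathrm{Id}$ while the paper works with a general frame with $\det C\neq 0$, which changes nothing.
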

\begin{proof}
As in the two dimensional case, we only need to show that there is
an element $m\otimes X_{1}\wedge\cdots\wedge X_{n}$ such that $\delta(m\otimes X_{1}\wedge\cdots\wedge X_{n})=0$.
Putting together the two previous lemmata we have the following expression
for the differential
\[
\begin{split}\delta(m\otimes X_{1}\wedge\cdots\wedge X_{n}) & =\det Ci\lambda(n-1)m\otimes\widehat{x}_{1}\wedge x_{2}\wedge\cdots\wedge x_{n}\\
 & +\det C\sum_{j=1}^{n}x_{j}(m)\otimes x_{1}\wedge\cdots\wedge\widehat{x}_{j}\wedge\cdots\wedge x_{n}\ .
\end{split}
\]
Since $\det C$ is different from zero, we need to impose the conditions
$x_{j}(m)=0$ for $j=2,\cdots,n$. Again we have to consider automorphisms
which are of the form
\[
\sigma(x_{1})=x_{1}+i\mu\ ,\qquad\sigma(x_{j})=x_{j}\ .
\]
Then $x_{j}(m)=[x_{j},m]=0$ implies that $m$ does not depend on
$x_{1}$. The other condition we need to impose is $i\lambda(n-1)m+x_{1}(m)=0$,
which can be rewritten in the form
\[
i(\lambda(n-1)+\mu)m+[x_{1},m]=0\ .
\]
By the Poincaré\textendash{}Birkhoff\textendash{}Witt theorem we can
write $m\in U(\mathfrak{g}_{\kappa})$ as
\[
m=\sum_{a_{2},\cdots,a_{n}}f_{0,a_{2},\cdots,a_{n}}x_{2}^{a_{2}}\cdots x_{n}^{a_{n}}\ ,
\]
where the sum is finite, $f_{0,a_{2},\cdots,a_{n}}$ are numerical
coefficients and the exponents are non-negative integers. We have
already imposed the condition that $m$ does not depend on $x_{1}$.
Now we compute the commutator of $m$ with $x_{1}$
\[
\begin{split}[x_{1},m] & =\sum_{a_{2},\cdots,a_{n}}f_{0,a_{2},\cdots,a_{n}}[x_{1},x_{2}^{a_{2}}\cdots x_{n}^{a_{n}}]\\
 & =\sum_{a_{2},\cdots,a_{n}}f_{0,a_{2},\cdots,a_{n}}\left([x_{1},x_{2}^{a_{2}}]x_{3}^{a_{3}}\cdots x_{n}^{a_{n}}+\cdots+x_{2}^{a_{2}}\cdots x_{n-1}^{a_{n-1}}[x_{1},x_{n}^{a_{n}}]\right)\\
 & =i\lambda\sum_{a_{2},\cdots,a_{n}}f_{0,a_{2},\cdots,a_{n}}(a_{2}+\cdots+a_{n})x_{2}^{a_{2}}\cdots x_{n}^{a_{n}}\ .
\end{split}
\]
Using this result we finally obtain
\[
i\lambda m+x_{1}(m)=\sum_{a_{2},\cdots,a_{n}}f_{0,a_{2},\cdots,a_{n}}i(\lambda(n-1+a_{2}+\cdots+a_{n})+\mu)x_{2}^{a_{2}}\cdots x_{n}^{a_{n}}=0\ .
\]
Since $a_{2},\cdots,a_{n}$ are non-negative integers, this equation
is satisfied if and only if $\mu=-\lambda(n-1+a_{2}+\cdots+a_{n})$,
for some $a_{2},\cdots,a_{n}\in\mathbb{N}_{0}$.
\end{proof}
We notice from this result that, by choosing the simplest case where
$a_{2}=\cdots=a_{n}=0$, we obtain the automorphism given by $\sigma(x_{1})=x_{1}-i(n-1)\lambda$
and $\sigma(x_{j})=x_{j}$. This is exactly the inverse of the modular
group $\sigma_{i}^{\omega}$ of the weight $\omega$ we introduced
in the first part, which was the starting point for the construction.
Other choices of the $a_{2},\cdots,a_{n}$ coefficients give an automorphism
$\sigma$ which is a negative power of this modular group. This is
exactly the same thing that happens for the twisted homology of $SL_{q}(2)$
\cite{SLq(2)}, for the Podle\'{s} spheres \cite{podles-hom} and
for other examples which come from quantum groups.

There is another feature of this result which is worth mentioning.
Consider again the simplest non-trivial cycle, which we obtain by
setting $f_{0,\cdots,0}=1$ and all other coefficients zero. Passing
from from the Lie algebra complex to the Hochschild complex via
\[
\varepsilon(m\otimes X_{1}\wedge\cdots\wedge X_{n})=\sum_{s\in S_{n}}\mathrm{sgn}(s)m\otimes X_{s(1)}\otimes\cdots\otimes X_{s(n)}\ ,
\]
we see that this cycle corresponds to
\[
c=\det C\sum_{s\in S_{n}}\mathrm{sgn}(s)1\otimes x_{s(1)}\otimes\cdots\otimes x_{s(n)}\ .
\]
We notice that it has the same form as in commutative case. Indeed
the analogy goes further since, as we discussed in the section on
the Dirac operator and the differential calculus, we have that $[D,x^{\mu}]_{\sigma}=-i\Gamma^{\mu}$.
Therefore, if we represent this cycle on the Hilbert space by $a_{0}[D,a_{1}]_{\sigma}\cdots[D,a_{n}]_{\sigma}$,
we get exactly the orientation cycle of the commutative case, which
corresponds to the volume form $dx^{1}\wedge\cdots\wedge dx^{n}$.

\section*{Acknowledgements}

I would like to thank Ludwik D\k{a}browski and Gherardo Piacitelli
for their assistance in the preparation of this paper. I am also grateful
to Ulrich Krähmer and Adam Rennie for helpful discussions and comments.

\end{document}